\documentclass[runningheads]{llncs}

\usepackage{graphicx}
\usepackage{amsmath}
\usepackage{amssymb}
\usepackage[labelfont=bf,size=small]{caption}
\usepackage[labelfont=normalfont]{subfig}
\usepackage{verbatim}
\usepackage{enumerate}
\usepackage{times}
\usepackage{tabularx}
\usepackage{todonotes}
\usepackage{hyperref}
\usepackage{wrapfig}
\usepackage{timetravel}
\usepackage{color}

\setCounters{theorem}

%%%% Title of contribution
\title{Smooth Orthogonal Drawings of Planar Graphs}

%%%% Author list of contribution
\author{
  Md.~Jawaherul~Alam\inst{1}
    \and
  Michael~A.~Bekos\inst{2}
    \and
  Michael~Kaufmann\inst{2}
    \and
  Philipp~Kindermann\inst{3}
    \and
  Stephen~G.~Kobourov\inst{1}
    \and
  Alexander~Wolff\inst{3}}

\institute{%
    Department of Computer Science, University of Arizona, USA\\
    \email{\{mjalam,kobourov\}@cs.arizona.edu}
    \and
    Wilhelm-Schickhard-Institut f\"ur Informatik, Universit\"at T\"ubingen, Germany\\
    \email{\{bekos,mk\}@informatik.uni-tuebingen.de}
    \and
    Lehrstuhl f\"ur Informatik I, Universit\"at W\"urzburg, Germany\\
    \email{http://www1.informatik.uni-wuerzburg.de/en/staff}
}

\authorrunning{Md.~Alam et al.}
\titlerunning{Smooth Orthogonal Drawings of Planar Graphs}

\newenvironment{pf}{\begin{proof}}{\qed\end{proof}}
\newcommand{\SC}[1]{SC$_{#1}$}
\newcommand{\OC}[1]{OC$_{#1}$}

\newcommand{\eps}{\ensuremath{\varepsilon}}
\graphicspath{{figures/},{pic/}}

\renewcommand{\textcolor}[2]{#2}

\begin{document}

\renewcommand{\topfraction}{0.99}
\renewcommand{\bottomfraction}{0.99}
\renewcommand{\textfraction}{0.01}

\maketitle

% ============================================================================
\begin{abstract}
  In \emph{smooth orthogonal layouts} of planar graphs, every edge
  is an alternating sequence of axis-aligned segments and circular arcs with common
  axis-aligned tangents.  In this paper, we study the problem of
  finding smooth orthogonal layouts of low \emph{edge complexity}, that is,
  with few segments per edge.  We say that a graph
  has \emph{smooth complexity} $k$---for short, an \SC{k}-layout---if it
  admits a smooth orthogonal drawing of edge complexity at most~$k$.

  Our main result is that every 4-planar graph has an \SC2-layout.
  While our drawings may have super-polynomial area, we show that, for
  3-planar graphs, cubic area suffices.  Further, we show that every
  biconnected 4-outerplane graph admits an \SC1-layout.  On the negative
  side, we demonstrate an infinite family of biconnected 4-planar graphs that
  requires exponential area for an \SC1-layout.  Finally, we present an
  infinite family of biconnected 4-planar graphs that does
  not admit an \SC1-layout.
\end{abstract}
% ============================================================================

% ============================================================================
\section{Introduction}
\label{sec:intro}
% ============================================================================

In the visualization of technical networks such as the structure of VLSI chips
\cite{lei80} %, floor plans~\cite{sfk-raldd-DCC10},
or UML diagrams~\cite{s-esadu-GD97} there is a strong tendency to draw
edges as rectilinear paths.
The problem of laying out networks in such a way is called
\emph{orthogonal graph drawing} and has been studied extensively.  For
drawings of (planar) graphs to be readable, special care is
needed to keep the number of bends small.  In a seminal work,
Tamassia~\cite{t-eggmn-87} showed that one can efficiently minimize
the total number of bends in orthogonal layouts of \emph{embedded
  4-planar graphs}, that is, planar graphs of maximum degree~4 whose
combinatorial embedding (the cyclic order of the edges around each vertex)
is given.  In contrast to this, minimizing the number of bends over
all embeddings of a 4-planar graph is NP-hard~\cite{gt-ccurp-01}.

In a so far unrelated line of research, circular-arc drawings of
graphs have become a popular matter of research in the last few years.
Inspired by American artist Mark Lombardi
(1951--2000), % \cite{lombardi03}
 Duncan et al.~\cite{degkn-ldg-JGAA12} introduced and studied \emph{Lombardi drawings}, which are circular-arc
drawings with the additional requirement of \emph{perfect angular
  resolution}, that is, for each vertex, all pairs of consecutive
edges form the same angle.  Among others, Duncan et al.\ treat
drawings of $d$-regular graphs
where all vertices have to lie on one circle.  They show that under
this restriction, for some subclasses, Lombardi drawings can be constructed
efficiently, whereas for the others, the problem is NP-hard.  They also
show~\cite{degkn-dtpar-DCG13} that trees can always be Lombardi drawn
in polynomial area, whereas straight-line drawings with perfect
resolution may need exponential area.

Very recently, Bekos et al.~\cite{BKKS12} introduced the
\emph{smooth orthogonal} graph layout problem that combines the two worlds;
the rigidity and clarity of orthogonal layouts with the artistic
style and aesthetic appeal of Lombardi drawings.  Formally, a smooth
orthogonal drawing of a graph is a drawing on the plane where
(i)~each vertex is drawn as a point; (ii)~edges leave and enter
vertices horizontally or vertically, (iii)~each edge is drawn as an alternating
sequence of axis-aligned line segments and circular-arc segments
such that consecutive segments have a common \emph{horizontal} or
\emph{vertical} tangent at their intersection point.  In the case of
(4-) planar graphs, it is additionally required that (iv)~there are
no edge-crossings.  Note that, by construction, (smooth) orthogonal
drawings of 4-planar graphs have angular resolution within a factor
of two of optimal.

\begin{wrapfigure}[13]{r}{.35\textwidth}
  \vspace{-5ex} % negative vspaces allowed in wrapfigures!
  \centering
  \includegraphics[width=\linewidth]{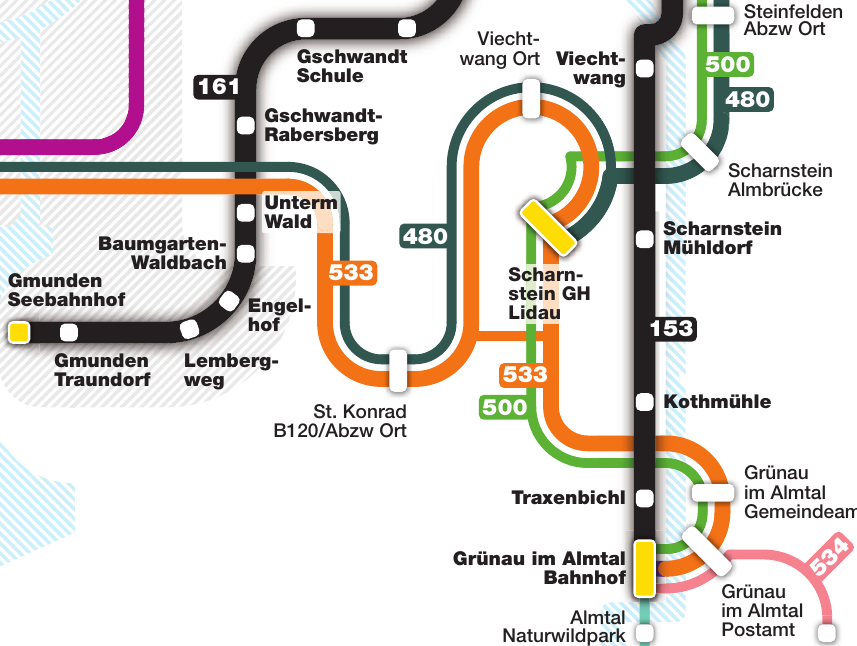}
  \caption{Clipping of the public transport map
    \emph{Gmunden -- V\"ocklabruck -- Salzkammergut}, Austria~\cite{url-map}}
  \label{fig:transport_map}
\end{wrapfigure}
Figure~\ref{fig:transport_map} shows a real-world example: a smooth
orthogonal drawing of an Austrian regional bus and train map.
Extending our model, the map has (multi-) edges that enter vertices
diagonally (as in \emph{Gr\"unau im Almtal Postamt}; bottom right).

For usability, it is important to keep the visual complexity of such
drawings low.  In a (smooth) orthogonal drawing, the
\emph{complexity} of an edge is the number of segments it consists
of, that is, the number of inflection points plus one.  Then, a
natural optimization goal is to minimize, for a given (embedded)
planar graph, the \emph{edge complexity} of a drawing, which is
defined as the maximum complexity over all edges.  We say that a
graph has \emph{orthogonal complexity $k$} if it admits an
orthogonal drawing of edge complexity at most $k$, for short, an
\emph{\OC{k}-layout}. Accordingly, we say that a graph has
\emph{smooth complexity~$k$} if it admits a smooth orthogonal
drawing of edge complexity at most~$k$,
for short, an \emph{\SC{k}-layout}.  % an "es-see-kay drawing"
We seek for drawings of 4-planar graphs with low smooth complexity.

\paragraph{Our Contribution.}

Known results and our contributions to smooth orthogonal
drawings are shown in Table~\ref{table:res}.
The main result of our paper is that any 4-planar graph admits an
\SC{2}-layout.  We start with the biconnected case (see
Section~\ref{sec:bicon}) and then turn towards general 4-planar
graphs (see Section~\ref{sec:arbit}).  Our upper bound of~2 for the
smooth complexity of 4-planar graphs improves the previously known
bound of~3 and matches the corresponding lower bound~\cite{BKKS12}. In contrast to
the known algorithm for \SC3-layout \cite{BKKS12}, which is based on an algorithm
for \OC3-layout of Biedl and Kant~\cite{bk-bhogd-CGTA98}, we use an
algorithm of Liu at al.~\cite{lms-la2be-DAM98} for \OC3-layout,
which avoids so-called S-shaped edges (see Figure~\ref{fig:shapes}b,
top). Such edges are not desirable since they impose strong
restrictions on vertex positions in a smooth orthogonal layout (see
Figure~\ref{fig:shapes}b, bottom). \textcolor{red}{Our construction requires more
than polynomial area. Therefore, we made no effort in proving
a bound.}

Further, we
prove that every biconnected 4-outerplane graph admits an \SC1-layout
(see Section~\ref{sec:smooth-1}), expanding the class of graphs with
\SC{1}-layout from triconnected or Hamiltonian 3-planar
graphs~\cite{BKKS12}.  Note that in our result, the outerplane
embedding can be prescribed, while in the other results the algorithms
need the freedom to choose an appropriate embedding.

We complement our positive results by the following two negative results;
see the appendix.
%, which we prove in the appendix.

\wormhole{exponential-area}
\begin{theorem}\label{thm:exp}
  There is an infinite family of graphs that require exponential
  area if they are drawn with \SC1.
\end{theorem}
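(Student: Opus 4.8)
The plan is an ``onion'' (nested-component) lower bound, tailored to the rigidity of \SC1-edges. For each $n$ I would build a biconnected 4-planar graph $G_n$ on $\Theta(n)$ vertices consisting of $n$ nested short cycles (\emph{rings}) $R_1,\dots,R_n$, with $R_1$ innermost, joined by a sparse, carefully chosen pattern of \emph{connector} edges between consecutive rings; a constant number of auxiliary degree-$\le 4$ vertices per level keeps the graph biconnected and of maximum degree~$4$. The whole proof then reduces to one geometric inequality: in every \SC1-drawing of $G_n$, the bounding box of $R_{i+1}$ has both side lengths at least $c$ times those of $R_i$, for an absolute constant $c>1$. Iterating this over the $n$ rings, and normalising a given drawing so that distinct vertices are at $L_\infty$-distance at least~$1$ (so that $R_1$ has a unit-size bounding box), yields area at least $c^{\,2(n-1)}$; since $|V(G_n)|=\Theta(n)$, this is $2^{\Omega(|V(G_n)|)}$, which is the theorem. (These graphs do admit an \SC1-layout, necessarily of exponential area, so no polynomial-area bound is possible.)

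The first step is combinatorial: I would show that every planar embedding of $G_n$ is onion-like, i.e., the rings occur in radial order $R_1,R_2,\dots,R_n$ up to a global reflection, so that the closed curve $R_i$ separates $R_{i-1}$ (and everything it encloses) from $R_{i+1}$ (and everything outside). This is the standard Jordan-curve argument for nested cycles: using two vertex-disjoint connectors between each consecutive pair of rings pins down the nesting and forbids any alternative order. This step is routine, and it is where the connector pattern gets fixed.

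The heart of the proof is the \emph{per-ring blow-up lemma}, and this is the step I expect to be the main obstacle. The leverage is that an \SC1-edge is extremely rigid: it is a single axis-parallel segment, a single quarter-circle arc (joining a horizontal port to a vertical port), or a single semicircle (joining two parallel ports), and in the two arc cases the radius is \emph{determined} by the endpoints and their ports. Because $R_i$ must enclose the entire subdrawing induced by $R_1,\dots,R_{i-1}$ and their connectors, and because the connectors leading to the inside and to the outside occupy prescribed ports at the vertices of $R_i$, one has to rule out drawing $R_i$ as a tight axis-parallel rectangle hugging the inner part; the port constraints and planarity force at least one edge of $R_i$ to be a circular arc whose lune must swallow a constant fraction of the inner subdrawing, whence that arc has radius at least a constant times the diameter of the inner subdrawing, and hence a bounding box a constant factor \emph{larger} than the inner one. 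Feeding the resulting arc into the same constraint for $R_{i+1}$ gives the multiplicative recursion. The delicate part is to make this airtight: one must enumerate the ways a clever drawer could mix segments and arcs, reuse ports, or push the inner part out of the lune, and block each of them --- which is precisely the purpose of the auxiliary vertices and of the specific connector pattern.

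Finally I would collect the pieces, writing $w(\cdot)$ and $h(\cdot)$ for the width and height of a bounding box: $\mathrm{area}(G_n)\ge w(R_n)\,h(R_n)\ge c^{2(n-1)}\,w(R_1)\,h(R_1)\ge c^{2(n-1)}$, with $|V(G_n)|=\Theta(n)$, so the area is exponential in the number of vertices. Since the embedding is forced, this lower bound holds for \emph{every} embedding of $G_n$, and the graphs are biconnected and 4-planar as claimed.
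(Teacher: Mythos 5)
Your overall strategy---nested layers whose drawings must grow by a constant factor at each level---is exactly the spirit of the paper's proof, but your write-up has a genuine gap precisely where you flag it: the per-ring blow-up lemma is neither proved nor even instantiated, since the connector pattern and auxiliary vertices that are supposed to ``block each counter-strategy'' are never specified. Worse, for a generic construction of nested short cycles joined by sparse connectors the lemma is simply false: a cycle of degree-$4$ vertices can be drawn with \SC1 as a ``rounded rectangle''---axis-parallel segments joined by quarter-circles of arbitrarily small radius---hugging the inner subdrawing as tightly as you like, and the connectors can be attached to its outer ports without forcing any arc whose radius is comparable to the inner diameter. The rigidity you invoke (radius determined by the endpoints and ports) only yields a lower bound on a radius if the graph structure forces an arc \emph{both of whose endpoints are far apart}, e.g.\ a semicircle spanning a whole side of the previous layer; nothing in your sketch forces that, and ``the lune must swallow a constant fraction of the inner subdrawing'' is exactly the claim that needs a construction-specific argument.

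The paper fills this gap with a concrete gadget rather than a generic ring: each layer is a cycle of four \emph{pairs of adjacent triangles}. Triangles are the source of rigidity: of the $16$ possible \SC1-drawings of a triangle only two have all free ports on the outer face, adjacent triangle pairs can be combined in only three ways, and a case analysis shows each layer admits only two \SC1-drawings, one of which cannot be attached to the next layer. Hence the \SC1-layout of the whole graph is unique up to translation, rotation and scaling, and in that layout every layer occupies more than twice the area of the previous one, giving the exponential bound for every embedding. If you want to salvage your plan, you would need to replace your unspecified rings by gadgets with this kind of forced drawing (triangles or similarly rigid subgraphs) and then actually carry out the enumeration that pins down the layer drawings; as it stands, the central multiplicative inequality $w(R_{i+1})\ge c\,w(R_i)$, $h(R_{i+1})\ge c\,h(R_i)$ with $c>1$ is an unsupported assertion.
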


% First, we show that there is an
% infinite family of graphs that admit \SC{1}-layouts but require
% exponential area; see Appendix~\ref{sec:area}.
%
So far, such a family
of graphs has only been known under the additional, rather strong
restriction of a fixed \emph{port assignment} \cite[Thm.~5,
Fig.~7]{BKKS12}.  A port assignment prescribes, for each edge, in which
direction it must enter its endpoints.

\wormhole{OCtwo-not-SCone}
\begin{theorem}
  \label{th:sc2}
  There is an infinite family of biconnected 4-planar graphs that
  admit \OC2-layouts but do not admit \SC{1}-layouts.
\end{theorem}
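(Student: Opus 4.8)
\medskip\noindent\emph{Proof plan.}
The plan is to isolate a constant-size biconnected $4$-planar ``core'' gadget $H$ that admits an \OC{2}-layout but no \SC{1}-layout, and then to build $G_n$ by combining $n$ copies of $H$ into a biconnected $4$-planar graph that still contains $H$ as a subgraph and still admits an \OC{2}-layout; this last step is routine, the only care being to keep the maximum degree equal to~$4$. Since restricting an \SC{1}-layout to a subgraph yields an \SC{1}-layout of that subgraph, it then suffices to show that $H$ has no \SC{1}-layout.

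The engine of the argument is a necessary condition for \SC{1}-layouts. Along a circular arc the tangent turns monotonically by exactly the central angle, so an arc whose tangents at both endpoints are axis-aligned is a quarter-, half-, or three-quarter-arc; placing its centre perpendicular to those tangents shows that for a straight segment or a semicircle the two endpoints agree in one coordinate, while for a quarter- or three-quarter-arc they satisfy $|\Delta x| = |\Delta y|$. Hence in any \SC{1}-layout the vector $u-v$ between the endpoints of each edge has slope in $\{0,\infty,+1,-1\}$; equivalently, along each edge one of the four quantities $x$, $y$, $x+y$, $x-y$ is preserved. I would also keep the finer information: a degree-$4$ vertex uses all four ports, in the cyclic order dictated by the embedding, and inside a patch all of whose faces are triangles the ``wide'' shapes (semicircles, three-quarter-arcs) cannot fit, so there every edge is a straight segment or a quarter-arc, its vector is one of the eight octant directions, and these directions respect the rotation at every vertex.

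For $H$ I would take a small triangulated block around a degree-$4$ vertex whose four neighbours are locked together by further triangular faces, so that the necessary condition forces, for each of the few planar embeddings of $H$ ($H$ being essentially $3$-connected), the positions of a bounded set of ``frame'' vertices into finitely many patterns, each a solution of a small overdetermined system of equations of the forms $x=c$, $y=c$, $x\pm y=c$. The \OC{2}-layout, by contrast, is immediate: an $L$-shaped orthogonal edge carries no metric constraint, so the same combinatorial structure is realised with one bend per edge essentially by inspection, and the combined graph is drawn in the same spirit.

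The hard part is the negative half, and two features make it delicate. First, ``$H$ admits an \SC{1}-layout'' quantifies over \emph{all} embeddings, which is exactly why $H$ must be rigid: making it $3$-connected pins the embedding up to reflection and the choice of outer face, and triangulating the neighbourhoods of the critical vertices removes the wide arc shapes and makes those neighbourhoods metrically rigid. Second, the slope condition is necessary but \emph{not} sufficient --- replacing each arc by its chord need not preserve planarity --- so one cannot simply invoke a straight-line realization; instead I would combine the slope condition with the fixed rotations and with the planarity of the small triangulated subdrawings to enumerate the finitely many position patterns of the frame vertices, and then check that every pattern is infeasible: some edge is forced to be a straight segment between two vertices that the remaining equations force to differ in both coordinates, or two vertices are forced to coincide, or the side on which a quarter-arc must bulge contradicts the prescribed rotation at a degree-$4$ vertex. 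Each of these contradicts the existence of an \SC{1}-layout of $H$, hence of any $G_n$.
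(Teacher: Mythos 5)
Your overall strategy---a constant-size gadget $H$ that admits an \OC{2}-layout but no \SC{1}-layout, subgraph monotonicity of \SC{1}-layouts, and chaining copies to get an infinite family---is the same skeleton as the paper's, but the proposal has genuine gaps. The negative half is only a plan: no gadget is exhibited and the ``finite enumeration of frame positions'' is never carried out, yet that enumeration \emph{is} the proof (the paper's Lemma~\ref{lem:sc2} is exactly such a case analysis: the edge $(1,2)$ of its gadget is drawn as a segment, quarter-circle, semicircle or 3/4-circle, and each case is killed with the embedding fixed). Moreover, the specific gadget you sketch is problematic: a triangulated block around a degree-4 vertex with all four neighbours locked by further triangular faces, under the maximum-degree-4 constraint, pushes you toward the octahedron, which (as cited in the paper) requires orthogonal complexity~4; since an \OC{2}-layout of $G_n$ restricts to an \OC{2}-layout of every subgraph, such a gadget would destroy the \emph{positive} half of the theorem. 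Your auxiliary claim that semicircles and 3/4-circles ``cannot fit'' inside a triangulated patch is also asserted rather than proved; Fig.~\ref{fig:alltriangles} shows triangle drawings that do use such arcs, and excluding them requires an argument about which ports must face the outer face.

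The most important gap is the quantification over embeddings. Your plan needs a single $H$ with no \SC{1}-layout for \emph{any} choice of outer face; 3-connectivity pins the rotation system up to reflection but not the outer face, and you give no argument covering all outer-face choices (nor is it clear such an $H$ with an \OC{2}-layout exists). The paper deliberately avoids proving this strong property for its gadget: it shows $H$ has no \SC{1}-layout only when the outer cycle is $(2,3,4,5)$ (using the two separating pairs to show all embeddings with that outer cycle are isomorphic), and then places \emph{two} copies of $H$ in $G$ so that in every embedding of $G$ at least one copy retains that outer cycle; the infinite family is then obtained by lengthening the connecting paths. Without either that two-copy trick or an actual proof that your $H$ fails \SC{1} for every outer face, the argument does not go through.
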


% Second, we show that there is an infinite family of graphs that admit
% \OC2-layouts but do not admit \SC1-layouts; see
% Appendix~\ref{sec:sc2}.
%
So far, the only graphs known not to admit an \SC1-layout were the
octahedron (which is the only 4-planar graph that needs~\OC4) and a family of
graphs %of~\OC3.
with a fixed triangular outer face.

\begin{table}[tb]
    \caption{Comparison of our results to the results of Bekos et
      al.~\cite{BKKS12}.}
    \label{table:res}

    \centering
    \begin{tabular}{|l|c|c|c|c|}
        \hline
        graph class & \multicolumn{3}{c|}{our contribution} &
        Bekos et al.~\cite{BKKS12} \\
        \cline{2-4}
        &complexity & area & reference & \\
        \hline
        biconnected 4-planar & \SC2 & super-poly & Theorem~\ref{thm:bicon-sc2} & \SC3 \\
        4-planar & \SC2 & super-poly & Theorem~\ref{thm:ar-smooth-2}&\\
        3-planar & \SC2 & $\lfloor {n^2}/4 \rfloor \times \lfloor n/2 \rfloor$ & Theorem~\ref{thm:bi-poly-3} &\\
        biconnected 4-outerplane & \SC1 & exponential & Theorem~\ref{thm:biconnected-outerplanar} &\\
        triconnected 3-planar & & & & \SC1 \\
        Hamiltonian 3-planar & & & & \SC1 \\
        poly-area & $\not\supseteq$\SC1 & & Theorem~\ref{thm:exp} & \\
        \OC3, octahedron & & & & $\not\subseteq$\SC1 \\
        \OC2 & $\not\subseteq$\SC1 & & Theorem~\ref{th:sc2} & \\
        \hline
    \end{tabular}
\end{table}

% ============================================================================
\section{Smooth Layouts for Biconnected 4-Planar Graphs}
\label{sec:bicon}
% ============================================================================

In this section, we prove that any biconnected 4-planar graph admits
an \SC2-layout. Given a biconnected 4-planar graph, we first
compute an \OC3-layout, using an algorithm of Liu et
al.~\cite{lms-la2be-DAM98}. Then we turn the result of their
algorithm into an \SC2-layout.

\textcolor{red}{Liu et al.\ choose two vertices~$s$ and~$t$ and compute an
$st$-ordering of the input graph. An $st$-ordering is an
ordering $(s=1,2,\dots,n=t)$ of the vertices such that
every~$j$ ($2<j<n-1$) has neighbors $i$ and $k$ with $i<j<k$.}
Then they draw vertices~1 and~2 on a horizontal grid line, \emph{row}~1, connecting
them by a so-called U-shape; see Fig.~\ref{fig:shapes}f.  They go
through the other vertices as prescribed by the $st$-ordering, placing
vertex~$i$ in row~$i-1$.  Calling an edge of which exactly one
end-vertex is already drawn an \emph{open edge}, they maintain the
following invariant:
\begin{enumerate}[$({I}_1)$]
\item In each iteration, every open edge is associated with a \emph{column}
  (a vertical grid line).
\end{enumerate}
An algorithm of Biedl and Kant~\cite{bk-bhogd-CGTA98} yields an
\OC3-layout similar to that of Liu et al.  However, Liu et al.
additionally show how to modify their algorithm such that it produces
\OC3-layouts without S-shapes; see Fig.~\ref{fig:shapes}b (top).  In an \SC2-layout, S-shapes
are composed of two quarter-circles; see Fig.~\ref{fig:shapes}b
(bottom); they are undesirable as they force their endpoints to lie on
a line of slope $\pm 1$.

In their modified algorithm, Liu et al.\ search for paths in the
drawing that consist only of S-shapes; every vertex lies on at most
one such path.  They place all vertices on such a path in the same
row, without changing their column.  This essentially converts all
S-shapes into horizontal edges.  Now every edge (except $(1,2)$ and
$(1,n)$) is drawn as a vertical segment, horizontal segment, L-shape,
or C-shape; see Fig.~\ref{fig:shapes}.  The edge $(1,2)$ is drawn as a
U-shape and the edge $(1,n)$, if it exists, is either drawn as a
C-shape or (only in the case of the octahedron) as a three-bend edge
that uses the left port of vertex~1 and the top port of vertex~$n$.

\begin{figure}[tb]
  \includegraphics{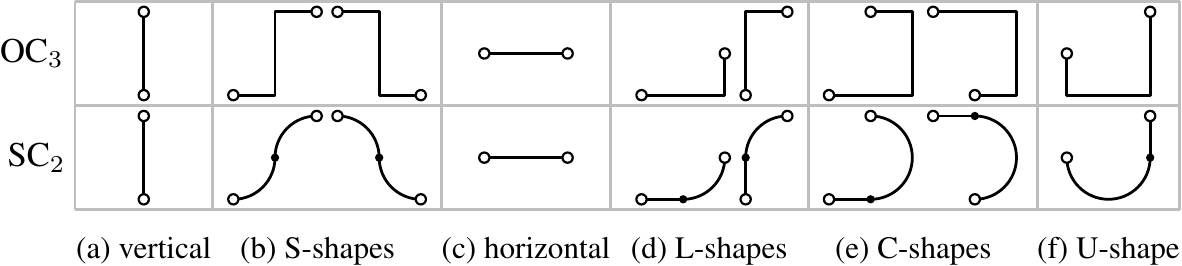}
  \caption{Converting shapes from the \OC3-layout to \SC2.}
  \label{fig:shapes}
\end{figure}

We convert the output of the algorithm of Liu et al.\ from \OC3
to~\SC2.  The coordinates of the vertices and the port assignment of
their drawing define a (non-planar) \SC2-layout using the conversion
table in Fig~\ref{fig:cases}.  In order to avoid crossings, we
carefully determine new vertex positions scanning the drawing of Liu
et al.\ from bottom to top.
% We define an \textit{L-shaped edge} (for short, an
% \emph{L-shape}) to be composed of a quarter-circle and a horizontal or
% vertical line-segment; see Fig.~\ref{fig:shapes}d. Similarly, a
% \textit{C-shaped edge} (for short, a \emph{C-shape}) is composed of a
% semi-circle and (optionally) a horizontal line segment; see
% Fig.~\ref{fig:shapes}e.

\begin{figure}[tb]
  \centering
  \includegraphics[scale=.72]{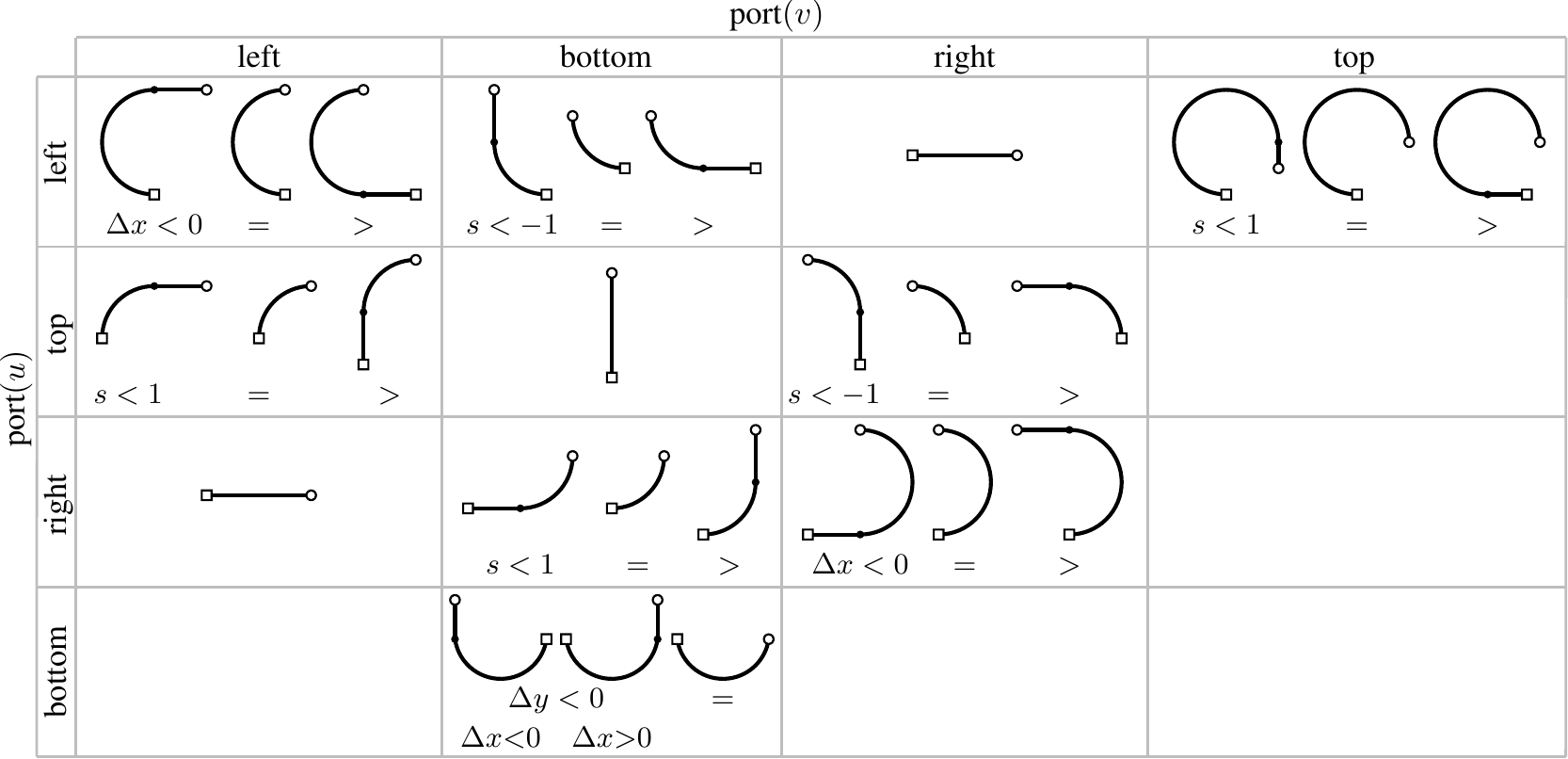}
  \caption{Cases for drawing the edge $(u,v)$ based on the port
    assignment.  In each case, $u$ is the lower of the two vertices
    ($y(u)<y(v)$).  As shorthand, we use $\Delta x=x(u)-x(v)$,
    $\Delta y=y(u)-y(v)$, and $s=\text{slope}(u,v)=\Delta x/\Delta y$.}
  \label{fig:cases}
\end{figure}

We now introduce our main tool for the conversion:
a \emph{cut}, for us, is a $y$-monotone curve consisting of
horizontal, vertical, and circular segments that divides the current
drawing into a left and a right part, and only intersects horizontal
segments and semi-circles of the drawing. In the following, we describe how one can find such a cut
from any starting point at the top of the drawing; see
Fig.~\ref{fig:cuts}.
(In spite of the fact that we define the cut going from top to bottom,
``to its left'' will, as usually, mean ``with smaller $x$-coordinate''.)

When such a cut encounters a vertex $u$ to its right with an outgoing
edge associated with its left port, then the cut continues by
passing through the segment incident to $u$. On the other hand, if
the port has an incoming L-shaped or C-shaped edge, the
cut just follows the edge.  The case when the cut encounters a
vertex to its left is handled symmetrically.  

\textcolor{red}{Let~$v$ be a vertex incident to two incoming C-shapes~$(u,v)$ and~$(w,v)$.
If $y(w)\le y(u)$ we call the C-shape~$(u,v)$ \emph{protected} by~$(w,v)$; otherwise, we
call it \emph{unprotected}.} In order to ensure
that a cut passes only through horizontal segments and that our
final drawing is planar, our algorithm will maintain the
following new invariants:
\begin{enumerate}[$({I}_1)$]
\setcounter{enumi}{1}
\item \textcolor{red}{An L-shape never contains
  a vertical segment (as in Fig.~\ref{fig:shapes}d right);
  it always contains a horizontal
  segment (as in Fig.~\ref{fig:shapes}d left) or no
  straight-line segment.}
\item \textcolor{red}{An unprotected C-shape never contains
  a horizontal segment incident to its top vertex
  (as in Fig.~\ref{fig:shapes}e right);
  it always contains a horizontal segment incident to its bottom
  vertex(as in Fig.~\ref{fig:shapes}e left) or no
  straight-line segment.}
\item The subgraph induced by the vertices that have already been
  drawn has the same embedding as in the drawing of Liu et al.
\end{enumerate}
\textcolor{red}{Below, we treat L- and C-shapes of complexity~1 as if they had
a horizontal segment of length~0 incident to their bottom vertex. Note that 
every cut moves around the protected C-shapes, so it will never intersect
their semi-circular segments.}
Now we we are ready to state the main Theorem of this Section by
presenting our algorithm for \SC2-layouts.

% \begin{lemma}\label{lem:cuts}
%   When a vertex $v$ is placed, the drawing of its incoming edges
%   can be completed so that the drawing remains planar and
%   invariants~$I_1$, $I_2$, and~$I_3$ are maintained.
% \end{lemma}

\begin{theorem}\label{thm:bicon-sc2}
  Every biconnected 4-planar graph admits an \SC{2}-layout.
\end{theorem}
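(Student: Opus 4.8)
The plan is to start from the \OC3-layout produced by the algorithm of Liu et al.~\cite{lms-la2be-DAM98}, which (after their modification) draws every edge other than $(1,2)$ and $(1,n)$ as a vertical segment, horizontal segment, L-shape, or C-shape, and respects invariant~$({I}_1)$ that every open edge is associated with a column. Reading off the vertex coordinates and the port assignment of that drawing and applying the conversion table in Fig.~\ref{fig:cases} immediately yields an \SC2-drawing that is correct combinatorially but possibly non-planar, because a C-shape or L-shape may now bulge into a neighboring column. The real work is to reposition the vertices — scanning the Liu et al.\ drawing row by row from bottom to top — so that the resulting \SC2-drawing is planar, while preserving invariants $({I}_2)$--$({I}_4)$.

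I would proceed by induction on the rows. When we move from having drawn the subgraph on vertices $1,\dots,i-1$ to adding vertex~$i$ (placed in row $i-1$ by Liu et al.), vertex~$i$ has some incoming open edges (associated with fixed columns by $({I}_1)$) and some outgoing edges that become new open edges. For each incoming edge we decide its \SC2-shape using Fig.~\ref{fig:cases}; the dangerous shapes are the L-shapes and the unprotected C-shapes, since those are the ones that occupy horizontal grid space to one side and hence can collide with material already drawn. The key device is the \emph{cut}: starting from the appropriate port at the top of the current drawing, we trace a $y$-monotone curve of horizontal, vertical, and circular pieces that separates the drawing into a left and a right part and — by construction, using the rules described before the theorem (pass through the horizontal segment at a vertex with an outgoing left-port edge; follow an incoming L- or C-shape; treat complexity-1 shapes as having a length-$0$ horizontal stub at the bottom vertex) — only crosses horizontal segments and semicircles. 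We then \emph{widen} the drawing along this cut: shift everything strictly to the right of the cut further to the right by the amount needed to accommodate the new L- or C-shape, stretching the horizontal segments it crosses. Because the cut only meets horizontal segments and (protected) semicircles, stretching is harmless: vertical segments keep their $x$-coordinate, horizontal segments just get longer, and no edge changes its shape type; invariants $({I}_2)$ and $({I}_3)$ are maintained because we never introduce a forbidden vertical segment into an L-shape or a forbidden top-incident horizontal segment into an unprotected C-shape, and $({I}_4)$ holds because shifting along a cut is an ambient homeomorphism of the drawing that fixes the embedding.

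After all $n$ vertices are placed, we insert the special edges $(1,2)$ as a U-shape and, if present, $(1,n)$ as a C-shape (or, only for the octahedron, as the three-bend edge using the left port of~$1$ and the top port of~$n$); since vertices~$1$ and~$2$ sit in row~$1$ at the bottom and the cut machinery was designed to leave their ports reachable on the outer face, these edges can be routed without crossings and without violating any invariant. At that point every edge has complexity at most~$2$, there are no crossings by $({I}_4)$, and the theorem follows.

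I expect the main obstacle to be proving rigorously that the cut, defined greedily from top to bottom by the local rules, actually exists, is $y$-monotone all the way down, and meets \emph{only} horizontal segments and (protected) semicircles — in particular that it never gets ``trapped'' by an unprotected C-shape or an L-shape coming from the wrong side. This is exactly what invariants $({I}_2)$ and $({I}_3)$ are engineered to guarantee, so the delicate point is the simultaneous induction: one must show that the repositioning step for vertex~$i$ not only fixes the planarity of the edges incident to~$i$ but also re-establishes $({I}_2)$ and $({I}_3)$ for the next iteration (including correctly classifying newly created C-shapes as protected or unprotected according to the relative heights of their bottom vertices). Handling the interaction between two incoming C-shapes at the same vertex — the protected/unprotected distinction — and making sure a cut always ``moves around'' a protected C-shape rather than through its semicircle, is the part that needs the most care.
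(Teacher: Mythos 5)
Your plan follows the paper's proof line for line: convert the S-shape-free \OC{3}-layout of Liu et al.\ into \SC{2} via the port-based conversion, place the vertices row by row, use $y$-monotone cuts that cross only horizontal segments to stretch the drawing and make room, maintain invariants $(I_1)$--$(I_4)$, and add $(1,2)$ and $(1,n)$ at the end. However, what you submit is a plan rather than a proof: the two issues you explicitly defer (``the delicate point is the simultaneous induction'' and the protected/unprotected interaction) are exactly the substance of the paper's argument, and you do not carry them out. Concretely, the paper restores $(I_2)$ by an explicit cut-and-shift with a quantified amount (move the left part by $\Delta y$ when the slope of an L-shape exceeds $1$, and similarly for L-shaped edges entering a bottom port), and it restores $(I_3)$ in the hard case of two incoming C-shapes at $v_1$ by a specific construction: cut along the edge $(x_1,v_1)$ into the bottom port, shift the right part by $y(v_1)-y(x_1)$ to open an empty square next to $x_1$, place $v_1$ at the intersection of the slope-$1$ diagonal through $x_1$ with the column of $w_1$, and draw $(x_1,v_1)$ as two quarter-circles with a common horizontal tangent. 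It is precisely this construction that makes $(u_1,v_1)$ protected, which in turn is what justifies your (and the paper's) claim that a cut can always be routed around protected C-shapes and never meets the semicircle of an unprotected one. Without these explicit steps, the assertion that ``stretching is harmless'' and that $(I_2)$, $(I_3)$ are re-established is circular: the cut's existence depends on the invariants, and the invariants are re-established only by the very constructions you leave open.

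Two smaller points. First, your claim that the conversion table ``immediately yields an \SC{2}-drawing that is correct combinatorially but possibly non-planar'' and that repositioning then fixes planarity inverts the paper's logic slightly: the paper never produces a non-planar intermediate drawing; it places each row so that planarity and the embedding of Liu et al.\ (invariant $(I_4)$) hold at every step, which is what makes the existence of a $y$-monotone cut provable (planar drawing with $y$-monotone edges). Second, in the final \SC{2}-drawing the edge $(1,n)$ is not a C-shape but a horizontal or vertical segment followed by a three-quarter-circle using the left port of vertex~$1$ and the top port of vertex~$n$; this works because those ports are deliberately kept free throughout, a bookkeeping point your proposal gestures at but should state.
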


\begin{pf}%[of Theorem~\ref{thm:bicon-sc2}]
  In the drawing~$\Gamma$ of Liu et al., vertices are arranged in rows.
  Let $V_1,\dots,V_r$ be the partition of the vertex set~$V$ in rows
  $1,\dots,r$.  Following Liu et al., the vertices in each such set
  induce a path in~$G$.
  We place vertices in the order $V_1,\dots,V_r$.  In this process, we
  maintain a planar drawing~$\Gamma'$ and the invariants~$I_1$
  to~$I_4$.  As Liu et al., we place the vertices on the integer grid.
  We deal with the special edges $(1,2)$ and $(1,n)$ at the
  end, leaving their ports, that is, the bottom and left port of
  vertex~$1$ and the top port of vertex~$n$, open.

\begin{figure}[tb]
  \begin{tabular}{cccc}
    \includegraphics[page=1]{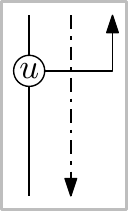} &
    \includegraphics[page=2]{figures/cutting-line} &
    \includegraphics[page=3]{figures/cutting-line} &
    \includegraphics{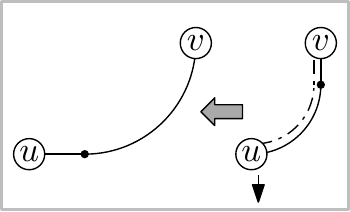} \\
    (a) open edge & (b) L-shape & (c) C-shape &
    (d) Maintaining $I_2$\\
  \end{tabular}
  \caption{Finding a cut.}
  \label{fig:cuts}
\end{figure}

  For invariant~$I_1$, we associate each open edge with the column on
  which the algorithm of Liu et al.\ places it.  If their algorithm
  draws the first segment of the open edge horizontally (from the
  source vertex to the column), we use the same segment for our
  drawing. We use the same ports for the edges as their algorithm.
  Thus, our drawing keeps the embedding of Liu et al., maintaining
  invariant~$I_4$.

  Assume that we have placed $V_1,\dots,V_{i-1}$ and that the vertices
  in~$V_i$ are $v_1,\dots,v_c$ in left-to-right order
  (the case~$v_1=v_c$ is possible). Vertex~$v_j$
  ($1\le j \le c$) is placed in the column with which the edge
  entering the bottom port of~$v_j$ is associated.  If the left port
  of~$v_1$ is used by an incoming (L- or C-shaped) edge~$e=(u_1,v_1)$, we
  place~$v_1$ (and the other vertices in $V_i$) on a row high enough
  so that a smooth drawing of~$e$ does not create any crossings with
  edges lying on the right side of~$e$ in~$\Gamma$;
  see Fig.~\ref{fig:handling_c-shapes}b.
  
  \begin{figure}[tb]
    \begin{tabular}{cccccc}
      \includegraphics[page=1]{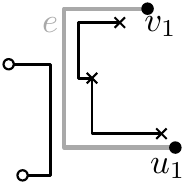} &
      \includegraphics[page=2]{pic/handling_c-shape} &
      \includegraphics[page=3]{pic/handling_c-shape} &
      \includegraphics[page=4]{pic/handling_c-shape} &
      \includegraphics[page=1]{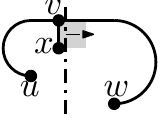} &
      \includegraphics[page=2]{pic/c-shape_special} \\
      (a) \OC3-layout & (b) moving $v_1$ & (c) finding a cut &
      (d) \SC2-layout & \multicolumn{2}{c}{(e)(f) protected C-shape}\\
    \end{tabular}
    \caption{Handling C-shapes}
    \label{fig:handling_c-shapes}
  \end{figure}

  %\begin{figure}[tb]
    %\subfloat[\label{fig:handling_c-shapes-1}{\OC3-layout}]
      %{\parbox[b]{.15\textwidth}{\centering\includegraphics[page=1]{pic/handling_c-shape}}}
    %\subfloat[\label{fig:handling_c-shapes-2}{moving $v_1$}]
      %{\parbox[b]{.15\textwidth}{\centering\includegraphics[page=2]{pic/handling_c-shape}}}
    %\subfloat[\label{fig:handling_c-shapes-3}{finding a cut}]
      %{\parbox[b]{.15\textwidth}{\centering\includegraphics[page=3]{pic/handling_c-shape}}}
    %\subfloat[\label{fig:handling_c-shapes-4}{\SC2-layout}]
      %{\parbox[b]{.15\textwidth}{\centering\includegraphics[page=4]{pic/handling_c-shape}}}
    %\subfloat[\label{fig:handling_c-shapes-5}{protected C-shape}]
      %{\parbox[b]{.2\textwidth}{\centering\includegraphics[page=1]{pic/c-shape_special}}}
    %\subfloat[\label{fig:handling_c-shapes-6}{\SC2-layout}]
      %{\parbox[b]{.2\textwidth}{\centering\includegraphics[page=2]{pic/c-shape_special}}}
    %\caption{Handling C-shapes}
    %\label{fig:handling_c-shapes}
%\end{figure}

  In order to make sure that the new drawing of~$e$ does not create
  crossings with edges on the left side of~$e$ in~$\Gamma$, we need to
  ``push'' those edges to the left of~$e$.  We do this by computing a
  cut that starts from $v_1$, separates the vertices and edges that lie
  on the left side of~$e$ in~$\Gamma$ from those on the right side,
  passes~$u_1$ slightly to the left, and continues downwards as
  described above; see Fig.~\ref{fig:handling_c-shapes}c. Since, by
  invariant~$I_4$, our drawing so far is planar and each edge is drawn
  in a $y$-monotone fashion, we can find a cut, too, that is
  $y$-monotone.  \textcolor{red}{We move everything on the left side
  of the cut further left such that~$e$ has no more crossings.
  Note that the cut intersects only horizontal edge segments.
  These will simply become longer by the move.}

  \textcolor{red}{Let $\Delta x_i = x(v_i)-x(u_i)$ and $\Delta y_i = y(v_i)-y(u_i)$
  for $i=1,\dots,c$. It is possible that the drawing of~$e$ violates
  invariant~$I_3$---if $u_1$ lies to the left of~$v_1$.  We consider two
  cases. First, assume that the edge~$(u_1,v_1)$ is the only incoming
  C-shape at $v_1$. In this case, we simply define a cut that starts
  slightly to the right of~$v_1$, follows~$e$, intersects~$e$ slightly
  to the left of~$u_1$, and continues downwards.  Then we move everything
  on the left side of the cut by $\Delta x_1+1$ units to the left.
  Next, assume that there is another C-shape~$(w_1,v_1)$ entering the
  right port of~$v_1$; see Fig.~\ref{fig:handling_c-shapes}e.
  We assume w.l.o.g. that $y(w_1)\le y(u_1)$. Let~$(x_1,v_1)$ be the
  edge incident to the bottom vertex of $v_1$. In this case, we 
  first find a cut that starts slightly to the right of~$v_1$,
  follows~$(x_1,v_1)$, passes~$x_1$ slightly to the right,
  and continues downwards. Then we move everything
  on the right side of the cut by $y(v_1)-y(x_1)$ units to the right.
  Thus, there is an empty square to the right of~$x_1$ with size~$y(v_1)-y(x_1)$-
  Now we place~$v_1$ at the intersection of the diagonal through~$x_1$
  with slope~1 and the vertical line through~$w_1$. Because of this
  placement, we can draw the edge~$(x_1,v_1)$ by using to quarter-circles
  with a common horizontal tangent in the top right corner of the empty square;
  see Fig.~\ref{fig:handling_c-shapes}f. Note that the edge~$(u_1,v_1)$ is
  protected by~$(w_1,v_1)$, so it can have a horizontal segment incident to~$v_1$.  
  This establishes~$I_3$.} 

  It is also possible that the drawing of~$e$ violates
  invariant~$I_2$---if~$\text{slope}(u_1,v_1)>1$. In this case we
  define a cut that starts slightly to the left of~$v_1$,
  intersects~$e$ and continues downwards. Then we move everything on
  the left side of the cut by $\Delta y_1$ units
  to the left.  This establishes~$I_2$.

  We treat~$v_c$, the rightmost vertex in the current row,
  symmetrically to~$v_1$.

  Now we have to treat the edges entering the vertices~$v_1,\dots,v_c$
  from below. Note that these edges can only be vertical or
  L-shaped. Vertical edges can be drawn without violating the
  invariants. However, invariant~$I_2$ may be violated if an
  edge~$e_i=(u_i,v_i)$ entering the bottom port of vertex~$v_i$
  is L-shaped; see Fig.~\ref{fig:cuts}d.
  Assume that $x(u_i)<x(v_i)$. In this case we
  find a cut that starts slightly to the left of $v_i$,
  follows $e_i$, intersects~$e_i$ slightly to the right of $u_i$,
  and continues downwards.  Then we move everything on the left side
  of the cut by~$\Delta y_i$ units to the left.
  This establishes~$I_2$.  We handle the case $x(u_i)>x(v_i)$
  symmetrically.

  We thus place the vertices row by row and draw the incoming edges
  for the newly placed vertices, copying the embedding of the
  current subgraph from~$\Gamma$.  This completes the drawing of
  $G-\{(1,2),(1,n)\}$.  Note that vertex $1$ has no incoming
  edge and vertex $2$ has only one incoming edge, that is, $(1,2)$.
  Thus, the bottom port of both vertices is still unused.
  We draw the edge $(1,2)$ as a U-shape. Finally, we finish the
  layout by drawing the edge $(1,n)$. By construction, the left
  port of vertex~1 is still unused. Note that vertex~$n$ has no
  outgoing edges, so the top port of~$n$ is still free.  Hence,
  we can draw the edge $(1,n)$ as a horizontal or vertical segment
  followed by a three-quarter-circle. This completes the proof of
  Theorem~\ref{thm:bicon-sc2}.
\end{pf}

%\begin{theorem}
  %Every biconnected 4-planar graph admits a smooth complexity-3 layout, using
  %polynomial area. \label{thm:bi-poly-4}
%\end{theorem}

% ============================================================================
\section{Smooth Layouts for Arbitrary 4-Planar Graphs}
\label{sec:arbit}
% ============================================================================

In this section, we describe how to create \SC{2}-layouts for
arbitrary 4-planar graphs. To achieve this, we decompose the graph
into biconnected components, embed them separately and then connect
them.  For the connection it is important that one of the connector
vertices lies on the outer face of its component.  Within each
component, the connector vertices have degree at most~3; if they have
degree~2, we must make sure that their incident edges don't use
opposite ports.  Following Biedl and Kant~\cite{bk-bhogd-CGTA98}, we
say that a degree-2 vertex~$v$ is drawn \emph{with right angle} if the
edges incident to~$v$ use two neighboring ports.

\begin{lemma}
  \label{lem:rightangle}
  Any biconnected 4-planar graph admits an \SC2-layout such that all
  \mbox{degree-2} vertices are drawn with right angle.
\end{lemma}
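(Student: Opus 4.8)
The plan is to adapt the construction behind Theorem~\ref{thm:bicon-sc2}. The key observation is that this construction keeps the port assignment of the \OC3-layout of Liu et al.\ fixed; it only relocates vertices and replaces straight segments by circular arcs. Hence, whether a degree-2 vertex uses two neighboring ports is already decided in the \OC3-layout and is inherited by the resulting \SC2-layout. It therefore suffices to make the algorithm of Liu et al.\ output an \OC3-layout in which every degree-2 vertex uses two neighboring ports, and then to run the conversion of Theorem~\ref{thm:bicon-sc2} unchanged.

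First I would pin down the bad case. A degree-2 vertex~$v$ is drawn straight in the \OC3-layout exactly when its two edges use opposite ports of~$v$: either both edges leave~$v$ vertically (using its bottom and top port) or both leave horizontally (using its left and right port); every pair of ports consisting of one horizontal and one vertical port is automatically neighboring. In the algorithm of Liu et al., an interior degree-2 vertex~$v$ (i.e., $v\notin\{1,n\}$) has a single incoming edge, which enters its bottom port --- this is how the algorithm fixes~$v$'s column --- and a single outgoing edge, which by default leaves its top port; this is the only source of vertical--vertical straight degree-2 vertices. The modification is to route the outgoing edge of every degree-2 vertex through a \emph{side} port rather than the top port. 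Both side ports of a degree-2 vertex are free, and routing an out-edge through a side port is a standard step of the algorithm (it is how out-edges of higher-degree vertices are handled) that merely inserts one fresh column; in particular it creates no crossings and introduces no new S-shapes. After the change the incoming edge of~$v$ still uses its bottom port and the outgoing edge uses a side port, a neighboring pair; and since this outgoing edge is now an L-shape (or later a C-shape) rather than a vertical segment, it is not an S-shape. As the incoming edge of a degree-2 vertex lies on~$v$'s own column and is therefore a vertical segment or an L-shape but never an S-shape, we get that no degree-2 vertex lies on a path of S-shapes. Consequently Liu et al.'s S-shape-elimination step, which places every maximal path of S-shapes in a single row, never moves a degree-2 vertex, so the horizontal--horizontal straight case cannot occur; every degree-2 vertex stays at a right angle.

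It remains to check the distinguished vertices~$1$, $2$, $n$, each of which is (or can be made) right-angled. If $\deg(1)=2$, the edge~$(1,2)$ is drawn as a U-shape using the bottom port of vertex~$1$, while its other edge leaves a side port of vertex~$1$ --- either because it was rerouted as above, or, if that edge is~$(1,n)$, already by the construction of Theorem~\ref{thm:bicon-sc2}. If $\deg(2)=2$, the U-shape~$(1,2)$ uses the bottom port of vertex~$2$ and its single out-edge is routed through a side port. If $\deg(n)=2$, vertex~$n$ has two incoming edges and no outgoing edge; the algorithm places~$n$ on the column of one of these edges, which then enters~$n$'s bottom port, while the other enters through a side port (this covers the case in which one of the two edges is~$(1,n)$, which is drawn as a C-shape into a side port). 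In all three cases the two ports are neighboring. Running the \SC2-conversion of Theorem~\ref{thm:bicon-sc2}, which preserves all of these ports, therefore yields an \SC2-layout in which every degree-2 vertex is drawn with right angle.

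The step I expect to require the most care is making sure that routing the out-edges of all degree-2 vertices through side ports --- and the column insertions this entails --- really meshes with Liu et al.'s S-shape-free construction and with the invariants $I_2$--$I_4$ of the conversion, without creating edge crossings. I expect this to go through because every change is confined to the two spare ports of a degree-2 vertex and to column insertions, both of which the underlying algorithm of Biedl and Kant already supports; and where a particular reroute would endanger planarity, the cut technique of Theorem~\ref{thm:bicon-sc2} can again be invoked to push the conflicting part of the drawing aside along a $y$-monotone curve that crosses only horizontal segments.
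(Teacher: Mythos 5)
Your plan hinges on a step you do not prove and that is not innocuous: modifying Liu et al.'s \OC3 algorithm so that the out-edge of every degree-2 vertex leaves through a side port, and then running the conversion of Theorem~\ref{thm:bicon-sc2} unchanged. In the layout of Liu et al., an edge is an S-shape exactly when it uses a side port at \emph{both} endpoints; an out-edge using the top port can never be one, but your rerouted out-edge now has a horizontal segment at its source and becomes an S-shape whenever its destination receives it through the opposite side port. So the claim that the rerouting ``introduces no new S-shapes'' is unsupported, and with it the argument that the S-shape-elimination step never moves a degree-2 vertex. More broadly, the conversion in the proof of Theorem~\ref{thm:bicon-sc2} silently relies on structural properties of Liu et al.'s output --- every edge other than $(1,2)$ and $(1,n)$ is a vertical/horizontal segment, L-shape or C-shape; the vertices of a row induce a path, so incoming L- or C-shapes at side ports occur only at the leftmost/rightmost vertex $v_1$/$v_c$ of a row; and the protected/unprotected C-shape analysis behind invariant $I_3$. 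Your modification changes the port assignment that these properties are derived from, so they would all have to be re-established for the modified algorithm; that re-verification is precisely the content of the lemma, and your closing paragraph explicitly defers it. (Incidentally, even if a rerouted out-edge did become an S-shape and were flattened to a horizontal edge, the vertex would still use bottom plus side ports, so the horizontal--horizontal scenario you guard against is not the real danger; the real danger is invalidating the invariants of the \SC2 conversion.)

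The paper avoids opening up Liu et al.'s algorithm altogether. It reroutes the \emph{incoming} edge $(u,v)$ at the bottom port of a degree-2 vertex $v$, and it does so at the smooth-drawing stage by a three-case local redraw (Fig.~\ref{fig:rightangle}): if $(u,v)$ is an L-shape, move $v$ into $u$'s row and draw the edge horizontally; if it is the U-shape between vertices~1 and~2, move one endpoint down so the edge becomes horizontal or an L-shape; if it is vertical, compute a cut along $(u,v)$, shift everything on one side, and redraw $(u,v)$ as an L-shape into a side port of $v$. Each of these operations is already justified by the cut machinery of Theorem~\ref{thm:bicon-sc2}, so nothing about the \OC3 layout or its S-shape elimination needs to be revisited. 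To make your route work you would have to re-prove Liu et al.'s guarantees (planarity, S-shape-freeness, the row/path and C-shape structure) under the new port assignment, which is a substantially larger task than the paper's local modification; as written, that gap is not closed.
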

\begin{pf}
  Let $v$ be a degree-2 vertex.  We now show how to adjust the
  algorithm of Section~\ref{sec:bicon} such that $v$ is drawn with
  right angle.  By construction, the top and the bottom ports of~$v$
  are used.  Let $(u,v)$ be the edge entering~$v$ from below (we allow
  $v=1$ and $u=2$).  We modify the algorithm such that $(u,v)$ uses
  the left or right rather than the bottom port of~$v$.  We consider
  three cases; $(u,v)$ is either L-shaped, U-shaped, or vertical.
  \textcolor{red}{These cases are handled when~$v$ is drawn inserted
  into the smooth orthogonal drawing.}

  First, we assume that $(u,v)$ is L-shaped; see
  Fig.~\ref{fig:rightangle-1}. Then, we can simply move $v$ to the
  same row as $u$, making the edge horizontal.

  Now, we assume that $(u,v)$ is U-shaped; see
  Fig.~\ref{fig:rightangle-2},~\ref{fig:rightangle-3}. Then $u=1$ and $v=2$ or
  vice versa.  If both have degree~2, we move the higher vertex to the
  row of the lower vertex (if necessary) and replace the U-shaped edge
  by a horizontal edge.  Otherwise we move the vertex with degree-2,
  say~$v$, downwards to row $y(u)-\Delta x$ such that we can replace
  the U-shape by an L-shape.

  Otherwise, $(u,v)$ is vertical; see Fig.~\ref{fig:rightangle-4}.
  Then, we compute a cut that starts slightly below~$v$,
  follows~$(u,v)$ downwards, passing~$u$ to its left. We move all
  vertices (including~$u$, but not~$v$) that lie on the right side of
  this cut (by at least $\Delta y$) to the right.  Then we can draw
  $(u,v)$ as an L-shape that uses the right port of~$v$.

  Observe that, in each of the three cases, we redraw all affected
  edges with~\SC2.  Hence, the modified algorithm still yields an
  \SC2-layout.  At the same time, all degree-2 vertices are drawn with
  right angle as desired.
\end{pf}

\begin{figure}[tb]
  \subfloat[\label{fig:rightangle-1}{an L-shape becomes a horizontal edge}]
  {\parbox[b]{.49\textwidth}{\centering\includegraphics[page=2]{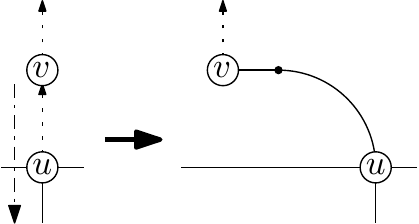}}}
  \hfill
  \subfloat[\label{fig:rightangle-2}{a U-shape becomes a horizontal edge}]%
  {\parbox[b]{.49\textwidth}{\centering\includegraphics[page=3]{rightangle}}}

  \subfloat[\label{fig:rightangle-3}{a U-shape becomes an L-shape}]%
  {\parbox[b]{.49\textwidth}{\centering\includegraphics[page=4]{rightangle}}}
  \hfill
  \subfloat[\label{fig:rightangle-4}{a vertical edge becomes an L-shape}]%
  {\parbox[b]{.49\textwidth}{\centering\includegraphics[page=1]{rightangle}}}

  \caption{Modification of the placement of degree-2 vertices.}
  \label{fig:rightangle}
\end{figure}

Now we describe how to connect the biconnected components. Recall that
a \emph{bridge} is an edge whose removal disconnects a graph $G$.  We
call the two endpoints of a bridge \emph{bridge heads}.  A \emph{cut
  vertex} is a vertex whose removal disconnects the graph, but is not
a bridge head.

\begin{theorem}
  \label{thm:ar-smooth-2}
  Any 4-planar graph admits an \SC{2}-layout. % using exponential area
  % REMOVED (AW): we'd have to show that we don't use more than
  % exponential area!
  % \todo{Is it necessary to state it at the theorem?}.
\end{theorem}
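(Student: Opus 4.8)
The plan is to assemble the \SC2-layout of $G$ from \SC2-drawings of its biconnected components, glued together along the block--cut tree. I may assume $G$ is connected, drawing the connected components of a disconnected graph side by side. Let $T$ be the block--cut tree of $G$, whose nodes are the biconnected blocks of $G$ (a single bridge counts as a block) and the cut vertices, and root $T$ at an arbitrary block. For a non-root block $B$, let $c_B$ be the cut vertex that is the parent of $B$ in $T$. Since $c_B$ has an edge leaving~$B$, it has degree at most~$3$ inside $B$, and likewise every cut vertex and every bridge head has degree at most~$3$ inside each block containing it. I would process the blocks in BFS order on~$T$ (the base case being the root block, drawn by Theorem~\ref{thm:bicon-sc2} and placed arbitrarily), maintaining by induction a planar \SC2-drawing of the union of the blocks already processed and, for every cut vertex already placed, a set of reserved ports together with the face of the current drawing into which each of its not-yet-processed child blocks is to be nested.

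When a block $B$ is reached, $c_B$ has already been drawn as a vertex of its parent block, the ports of $c_B$ reserved for $B$ are fixed, and they all face one face $f$ of the current drawing, with $c_B$ on the boundary of $f$. I would draw $B$ by Theorem~\ref{thm:bicon-sc2}, invoking Lemma~\ref{lem:rightangle} so that every degree-$2$ cut vertex or bridge head of $B$ is drawn with a right angle, and choosing the $st$-ordering of the biconnected construction so that $c_B\in\{s,t\}$ (fixing an embedding of $B$ with $c_B$ on the outer face and, if necessary, inserting a virtual $(s,t)$-edge). A check of the construction shows that $c_B$ then lies on the outer boundary of the resulting drawing $\Gamma_B$, with its free ports facing the outer face of $\Gamma_B$. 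I would then shrink $\Gamma_B$, rotate it by a multiple of $90^{\circ}$, possibly reflect it, and translate it so that $c_B$ keeps its position, the edges of $B$ incident to $c_B$ leave through the ports reserved for $B$ and into $f$, and $\Gamma_B$ minus $c_B$ lies crossing-free in an unused part of $f$ next to $c_B$; scaling, $90^{\circ}$-rotation, reflection and translation all preserve orthogonality and edge complexity, so $\Gamma_B$ stays an \SC2-drawing. A bridge $(c_B,b)$ is handled in the same spirit: it is drawn as a single axis-aligned segment, an L-shape, or a C-shape leaving $c_B$ through a free port into $f$ and ending at $b$, which is placed inside $f$ on the outer face of whatever block subsequently hangs off $b$. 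Leaf blocks, degree-$1$ vertices and isolated vertices are immediate.

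Two points need care. Planarity is the easy one: each child block (and each bridge together with its far endpoint) is placed inside a single face of the drawing built so far and may be made arbitrarily small, hence disjoint from everything drawn before it; since each $\Gamma_B$ is internally planar and keeps its own embedding, the induction yields a planar \SC2-layout of $G$ once $T$ is exhausted. (The area is not polynomially bounded, but the statement claims nothing about area.) The main obstacle is the port bookkeeping at a cut vertex $v$: its at most four edges are partitioned among the block of $T$ that is the parent of $v$ and the child blocks at $v$, and one must run a short case analysis over this partition. Using that a degree-$2$ occurrence of $v$ is drawn with a right angle (so its two used ports, and hence its two free ports, are each consecutive around $v$), that a degree-$3$ occurrence leaves a single free port, and that each block drawing is available in any of its eight rotations and reflections, one checks that the ports at $v$ can be assigned so that each block incident to $v$ uses a block of consecutive ports facing one face, distinct blocks use disjoint port sets, and the ports left free when the parent block of $v$ is drawn suffice, in the right cyclic order, to attach all child blocks of $v$. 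This is exactly where the two properties emphasized just before the statement---``one of the connector vertices lies on the outer face of its component'' and ``degree-$2$ connectors are drawn with a right angle''---are used.
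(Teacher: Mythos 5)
Your proposal is correct and follows essentially the same route as the paper: decompose along the block--cut structure, draw each biconnected block with Theorem~\ref{thm:bicon-sc2} modified by Lemma~\ref{lem:rightangle} so that degree-2 connectors get right angles, choose the connector vertex as the last vertex of the $st$-order so it lies on the outer face, and then scale, rotate, and translate each block into the face holding the free ports of its attachment vertex, identifying cut vertices or drawing bridges through opposite/neighboring free ports. Your extra remarks on port bookkeeping and reflections only make explicit what the paper leaves implicit.
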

\begin{proof}
  %We first assume that there are no bridges.
  Let $G_0$ be some biconnected component of $G$, and let
  $v_1,\dots,v_k$ be the cut vertices and bridge heads of $G$ in $G_0$.
  For $i=1,\dots,k$, if $v_i$ is a bridge head, let $v_i'$ be the other
  head of the bridge, otherwise let $v_i'=v_i$.  Let~$G_i$ be the
  subgraph of $G$ containing $v_i'$ and the connected components
  of~$G-v_i'$ not containing~$G_0$. Following Lemma~\ref{lem:rightangle}, $G_0$ can be drawn such
  that all degree-2 vertices are drawn with right angles.

  The algorithm of Section~\ref{sec:bicon} that we modified in the proof
  of Lemma~\ref{lem:rightangle} places the last vertex ($n$) at the top
  of the drawing and thus on the outer face.  When drawing~$G_i$, we
  choose $v_i'$ as this vertex.  By induction, $G_i$ can be drawn such
  that all degree-2 vertices are drawn with right angles.

  In order to connect $G_i$ to $G_0$, we make~$G_0$ large enough to
  fit~$G_i$ into the face that contains the free ports of~$v_i$.  We may
  have to rotate~$G_i$ by a multiple of~$90^\circ$ to achieve the
  following.  If $v_i$ is a cut vertex, we make sure that $v_i'$ uses
  the ports of $v_i$ that are free in~$G_0$.  Then we identify $v_i$ and
  $v_i'$.  Otherwise we make sure that a free port of~$v_i$ and a free
  port of~$v_i'$ are opposite.  Then we draw the bridge $(v_i,v_i')$
  horizontally or vertically.  This completes our proof.
\end{proof}

For an example run of our algorithm, see Fig.~\ref{fig:example} in Appendix~C.
For graphs of maximum degree~3, we can make our drawings more compact.
This is due to the fact that we can avoid C-shaped edges (and hence
cuts) completely.  In the presence of L-shapes only, it suffices to
stretch the orthogonal drawing by a factor of~$n$.

\begin{theorem}
  \label{thm:bi-poly-3}
  Every biconnected 3-planar graph with $n$ vertices admits an
  \SC{2}-layout using area $\lfloor n^2/4 \rfloor \times \lfloor n/2
  \rfloor$.
\end{theorem}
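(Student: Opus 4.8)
The plan is to start from a planar orthogonal drawing of the 3-planar graph obtained by a known algorithm, then convert it to an \SC2-layout while controlling the area. Specifically, I would invoke the algorithm of Biedl and Kant~\cite{bk-bhogd-CGTA98} (or Liu et al.~\cite{lms-la2be-DAM98}), which produces an \OC3-layout of a biconnected 3-planar graph on a grid of size at most $\lfloor n/2\rfloor \times \lfloor n/2 \rfloor$ (for degree-3 graphs their construction is more compact than in the degree-4 case). The key structural observation, already flagged in the text preceding the theorem, is that for maximum degree~3 no C-shaped edges are needed: every vertex has a free port, so each edge can be routed using at most one bend, i.e.\ as a vertical segment, a horizontal segment, or an L-shape. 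This means the troublesome S-shapes and C-shapes that forced super-polynomial stretching in Theorem~\ref{thm:bicon-sc2} simply do not occur here.

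The conversion itself would reuse the case analysis of Fig.~\ref{fig:cases} restricted to the L-shape / straight-segment cases. An L-shape in the orthogonal drawing connecting $u$ (lower) to $v$ (higher) with horizontal offset $\Delta x$ and vertical offset $\Delta y$ becomes, in the smooth setting, a quarter-circle of radius $\min(|\Delta x|,|\Delta y|)$ together with one straight segment (horizontal if $|\Delta x|>|\Delta y|$, vertical if $|\Delta x|<|\Delta y|$, and a pure quarter-circle if they are equal). By invariant~$I_2$ from Section~\ref{sec:bicon}, we want every L-shape to carry a horizontal segment rather than a vertical one; the standard fix is to guarantee $|\Delta x|\ge|\Delta y|$ for every L-shaped edge. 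First I would establish that in the degree-3 orthogonal drawing each L-shaped edge already satisfies, or can be made to satisfy, $|\Delta y|\le \lfloor n/2\rfloor$ — which it does trivially, since the whole drawing has height at most $\lfloor n/2\rfloor$. Then, stretching the drawing horizontally by a factor of (roughly) $\lceil n/2\rceil$ — i.e.\ multiplying every $x$-coordinate by this factor — makes $|\Delta x|\ge |\Delta y|$ for every non-vertical edge simultaneously, so every L-shape can be realized with a horizontal segment and a quarter-circle, and no two arcs or segments of distinct edges collide because the horizontal stretch preserves planarity and only lengthens horizontal segments. Vertical and horizontal edges are unchanged. This yields an \SC2-layout on a grid of width $\lfloor n/2\rfloor \cdot \lceil n/2 \rceil \le \lfloor n^2/4\rfloor$ and height $\lfloor n/2\rfloor$, matching the claimed bound.

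For the two special edges $(1,2)$ and $(1,n)$ I would handle them exactly as in the proof of Theorem~\ref{thm:bicon-sc2}: in a degree-3 graph vertex~$1$ has at most the bottom and left ports occupied by these edges, $(1,2)$ is drawn as a U-shape using the bottom ports (which fits within the existing width after the stretch, or forces at most a constant additive widening that is absorbed by the floor), and $(1,n)$, if present, is drawn with one straight segment plus a three-quarter circle using the left port of~$1$ and the top port of~$n$. Since a U-shape consists of two quarter-circles plus possibly a horizontal segment, it has complexity at most~2 in the smooth model, so the edge-complexity bound is preserved. One must double-check that the U-shape of $(1,2)$ does not exceed the width budget; placing vertices~$1$ and~$2$ near one side and routing the U-shape toward the already-present empty margin keeps the width within $\lfloor n^2/4\rfloor$, again because after the horizontal stretch there is plenty of slack.

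The main obstacle I expect is verifying that the horizontal stretch does not introduce crossings among the circular arcs, and that the arc of one L-shape does not overlap the arc or segment of an incident edge at a shared vertex — in particular, ensuring the ports are respected and the radii are chosen small enough (radius exactly $|\Delta y|$, the smaller offset) that each arc stays within the axis-aligned bounding box of its original L-shape. Since the orthogonal drawing is planar and those bounding boxes have pairwise-disjoint interiors except where edges share an endpoint, and since at a shared endpoint the arcs leave along distinct ports, no crossing is created; this is the same argument underlying the conversion table of Fig.~\ref{fig:cases}, just without the C-shape rows. A secondary point is checking that the Biedl--Kant / Liu et al.\ drawing for degree-3 graphs indeed fits in height $\lfloor n/2\rfloor$ and width $\lfloor n/2\rfloor$ before stretching; this is a cited fact, so I would simply quote it and then multiply through.
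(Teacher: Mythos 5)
Your conversion step (stretch the orthogonal drawing horizontally by its height, then replace every bend by a quarter-circle whose radius is the vertical offset, so that each arc stays inside the elbow of its L-shape) is exactly the mechanism the paper uses; the paper simply cites it as \cite[Thm.~2]{BKKS12} instead of re-deriving it. The gap is in where your orthogonal drawing comes from. You want to ``simply quote'' that the \OC3-algorithm of Biedl--Kant~\cite{bk-bhogd-CGTA98} / Liu et al.~\cite{lms-la2be-DAM98}, specialized to maximum degree~3, produces a drawing on a $\lfloor n/2\rfloor\times\lfloor n/2\rfloor$ grid containing only straight segments and L-shapes. That is not what those references state, and your justification (``every vertex has a free port, so each edge can be routed with at most one bend'') is not an argument: in that incremental framework the edges $(1,2)$ and $(1,n)$ are U-/C-shaped by construction even for cubic graphs, and whether C-shapes can be avoided at the sink (which has all three edges incoming) needs a proof, not a degree count. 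The statement you actually need is Kant's theorem~\cite{k-dpwco-A96} that every biconnected 3-planar graph \emph{except $K_4$} admits an \OC2-layout in area $\lfloor n/2\rfloor\times\lfloor n/2\rfloor$; that is the paper's starting point, and it makes all of your special-edge machinery unnecessary. The $K_4$ exception is a real omission in your proposal: $K_4$ has no orthogonal drawing with at most one bend per edge, so your ``no C-shapes'' claim fails for it, and the paper has to dispose of it separately by exhibiting an explicit \SC1-layout (Fig.~\ref{fig:k4}).

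Two smaller inaccuracies in the same vein: your complexity accounting for the retained U-shape $(1,2)$ is off --- a horizontal segment followed by two quarter-circles is three smooth segments, i.e.\ \SC3, unless the two arcs have equal radius and merge into a single semicircle, which you would have to arrange; and the three-quarter-circle for $(1,n)$ would need its own area bookkeeping, since it can leave the stretched grid. None of this arises once you start from Kant's \OC2-layout, where every edge is already a segment or an L-shape; the stretching argument you wrote then goes through verbatim and gives the claimed $\lfloor n^2/4\rfloor\times\lfloor n/2\rfloor$ bound.
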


\begin{wrapfigure}[6]{r}{.265\textwidth}
  \centering
  \vspace{-5ex} % negative vspaces allowed in wrapfigures!
  \includegraphics{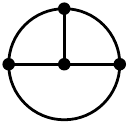}
  \caption{\SC1-layout of~$K_4$.}
  \label{fig:k4}
\end{wrapfigure}

%\begin{pf} %% Workaround needed because of wrapfigure!
\noindent\emph{Proof.}~
It is known that every biconnected 3-planar graph except $K_4$ has an
\OC2-layout using area $\lfloor n/2 \rfloor \times \lfloor n/2
\rfloor$ from Kant~\cite{k-dpwco-A96}. Now we use the same global
stretching as Bekos et al.\ when they showed that every \OC{2}-layout
can be transformed into an \SC{2}-layout \cite[Thm.~2]{BKKS12}: we
stretch the drawing horizontally by the height of the drawing, that
is, by a factor of $\lfloor n/2 \rfloor$.  This makes sure that we can
replace every bend by a quarter circle without introducing crossings.
Figure~\ref{fig:k4} shows an \SC1-layout of $K_4$; completing our
proof.  \hfill$\Box$
%\end{pf}

% ============================================================================
\section{\SC{1}-Layouts of Biconnected 4-Outerplane Graphs}
\label{sec:smooth-1}
% ============================================================================

In this section, we consider \emph{4-outerplane} graphs, that is,
4-outerplanar graphs with an outerplanar embedding. We prove that
any biconnected 4-outerplane graph admits an \SC1-layout. To do so,
we first prove the result for a subclass of 4-outerplane graphs,
which we call $(2,3)$-restricted outerplane graphs; then we
generalize. Recall that the \emph{weak dual} of a plane graph is the
subgraph of the dual graph whose vertices correspond to the bounded
faces of the primal graph.

\begin{definition}
  A 4-outerplane graph is called \emph{$(2,3)$-restricted} if it
  contains a pair of consecutive vertices on the outer face, $x$ and
  $y$, such that $\deg(x)=2$ and $\deg(y) \le 3$.
\end{definition}

\begin{lemma}
  \label{lem:2-3-restricted}
  Any biconnected $(2,3)$-restricted 4-outerplane graph admits an \SC{1}-layout.
\end{lemma}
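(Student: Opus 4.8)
The plan is to argue by induction on the number of vertices (or bounded faces), peeling off the degree-2 vertex $x$ guaranteed by the $(2,3)$-restricted hypothesis and drawing the rest recursively. First I would set up the base cases: small graphs such as a single inner face (a cycle of length $\le 4$, or generally the case where the weak dual is a single vertex), which can be drawn directly as an axis-aligned box-like cycle with each edge a straight segment or a single quarter-circle. For the inductive step, let $x$ and $y$ be the consecutive outer vertices with $\deg(x)=2$ and $\deg(y)\le 3$. Since $x$ has degree 2, its two neighbors are $y$ and one other vertex $z$, and the edge $xy$ together with $xz$ bounds a single bounded face $f$ (because $x$ is on the outer face and biconnected). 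I would delete $x$ (and, if this creates a degree-2 vertex or a smaller $(2,3)$-restricted configuration, track that) to get a smaller graph $G'$, which is still biconnected and 4-outerplane; the key point is to verify that $G'$ is again $(2,3)$-restricted so the induction goes through, or else to handle the few exceptional shapes of $G'$ by hand.

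The heart of the construction is the placement of $x$ relative to the recursively obtained \SC1-layout of $G'$. After drawing $G'$, the edge $yz$ (which was part of the face $f$ in $G$, now on the outer face of $G'$) appears on the outer boundary; I would reserve enough room on the outside of that edge to reinsert $x$ so that both $xy$ and $xz$ are drawn with complexity 1. Because $y$ has degree $\le 3$ and $z$ has degree $\le 4$, I need to check that at least one free port remains at each of $y$ and $z$ pointing toward the region where $x$ will sit, and that a quarter-circle (or a single straight segment) can realize each of $xy$, $xz$ without crossing $G'$. This typically forces $x$ to be placed on a line of slope $\pm 1$ through one endpoint and on an axis-parallel line through the other; controlling the embedding so that the relevant ports of $y$ and $z$ are consistent with this is exactly where the degree bounds $\deg(x)=2$, $\deg(y)\le 3$ are used. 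A stretching step (scaling the drawing of $G'$ by a sufficiently large factor, as in Theorem~\ref{thm:bi-poly-3}) may be needed so that the newly inserted arcs for $xy$ and $xz$ stay outside the bounding box of $G'$ and do not cross each other.

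I would organize the case analysis by the port configuration at $y$: whether the two edges of $G'$ incident to $y$ on the outer face use neighboring ports (so $y$ has two consecutive free ports) or opposite ports (only isolated free ports). In the favorable case, $x$ can be dropped into the wedge of the two free consecutive ports of $y$ and connected to $z$ with a single arc; in the less favorable case I would first locally perturb the drawing of $G'$ near $y$ — re-routing the edge $yz$ using the conversion moves (cuts) from Section~\ref{sec:bicon} — to free up a usable port, arguing this keeps the drawing \SC1 since $yz$ was a single segment/arc to begin with. Throughout, I would maintain as an invariant that the outer face is drawn so that each outer vertex has a designated "outward" free port (or pair of ports) available, which is what lets the next induction step reinsert its peeled vertex.

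The main obstacle I anticipate is the port/embedding bookkeeping in the unfavorable case: ensuring that after reinserting $x$, the vertex $z$ (which may have degree 4) still has a port available in the correct direction, and simultaneously that the arc realizing $xz$ can be routed with complexity exactly 1 without colliding with the arc for $xy$ or with the rest of $G'$. This may require a somewhat delicate choice of which of the $\le 4$ rotations to apply to the recursive drawing of $G'$, and possibly splitting into sub-cases according to whether $z$ also lies on the outer face of the original graph. If the clean induction on $(2,3)$-restricted graphs breaks (i.e., $G'$ is no longer $(2,3)$-restricted), I would fall back on a direct structural description of those residual graphs — they should be very limited, e.g. the "fan"-like or near-maximal outerplane graphs — and draw them explicitly, which is plausibly why the lemma is stated for the $(2,3)$-restricted subclass first and generalized afterwards.
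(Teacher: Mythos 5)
Your plan is a bottom-up peel-and-reinsert induction, which is genuinely different from the paper's construction, but as stated it has concrete gaps. First, deleting the degree-2 vertex $x$ does not in general leave a biconnected graph: if the bounded face containing $x$ has length at least~4 (e.g., $G$ a cycle, or any $(2,3)$-restricted graph whose root face is a quadrilateral), then $G-x$ has a cut vertex, so your induction hypothesis does not apply. Second, your reinsertion step presumes that the two neighbors $y,z$ of $x$ are joined by an edge $yz$ lying on the outer face of $G'$; this holds only when the face at $x$ is a triangle, which the $(2,3)$-restricted hypothesis does not guarantee. Third, the repair move in the ``unfavorable port'' case---re-routing $yz$ via the cuts of Section~\ref{sec:bicon}---is not available here: those cuts translate part of an existing drawing horizontally, and in an \SC1-layout a quarter-circle edge forces its endpoints onto a line of slope $\pm1$, so such a translation destroys complexity-1 edges elsewhere. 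More globally, inserting a new vertex adjacent to two already-placed vertices under \SC1 constraints pins its position almost completely; showing that this position is always unoccupied and that both new arcs avoid the rest of the drawing is exactly the hard part, and ``scale $G'$ by a large factor'' does not address it, since the problem is the relative position of $y$, $z$ and the reserved region, not its size.

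The paper avoids all of this by never finishing a subdrawing before its neighbors are placed: it roots the weak dual tree at the face $f^*$ containing $x$ and $y$, draws $G$ face by face in pre-order, and maintains invariants ($J_1$)--($J_4$) (only segments and quarter-circles, free ports consecutive and pointing to the outer face, remaining-degree-2 vertices incident to a quarter-circle, and a port condition for axis-parallel edges). Each new face is attached along a single already-drawn edge according to the cases of Fig.~\ref{fig:outer-sc1}, and planarity is ensured geometrically by confining each subtree to a diagonal semi-strip whose surrounding strips are provably empty. The peel-and-reinsert idea you propose is in fact used in the paper, but only once and in a controlled situation (Lemma~\ref{lem:make-2-3-restricted} and Theorem~\ref{thm:biconnected-outerplanar}), where the removed vertex comes from a triangular leaf face, so the edge between its neighbors exists, biconnectivity is preserved, and the consecutive-free-port invariant of the main algorithm makes the single reinsertion (segment plus quarter-circle, or quarter- plus 3/4-circle) immediate. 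To salvage your approach you would need to restrict the peeled vertex to triangular leaf faces, prove that biconnectivity and $(2,3)$-restrictedness are maintained or handle the exceptions, and formulate and maintain port/position invariants strong enough to make every reinsertion feasible---at which point you have essentially rebuilt the paper's invariant-driven construction.
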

\begin{pf}
  Let $x$ and $y$ be two vertices, consecutive on the outer face of
  the given graph~$G$ such that $\deg(x)=2$ and $\deg(y) \le 3$. Let also $T$ be the
  weak dual tree of $G$ rooted at the node, say $v^*$, of $T$
  corresponding to the bounded face, say $f^*$, containing both $x$ and $y$.
  We construct the \SC{1}-layout
  $\Gamma$ for $G$ by traversing~$T$, starting with $v^*$.  When
  we traverse a node of $T$, we draw the corresponding face of
  $G$ with \SC{1}.

  Consider the case when we have constructed a drawing $\Gamma(H)$ for
  a connected subgraph $H$ of $G$ and we want to add a new face~$f$
  to~$\Gamma(H)$.  For each vertex $u$ of $H$, let $p_u=(x(u), y(u))$
  denote the point at which $u$ is drawn in $\Gamma(H)$.  The
  \textit{remaining degree} of $u$ is the number of vertices adjacent
  to~$u$ in $G-H$.  Since we construct~$\Gamma(H)$ face by face, the
  remaining degree of each vertex in $H$ is at most two.  The
  \textit{free} ports of~$u$ are the ones that are not occupied by an
  edge of $H$ in $\Gamma(H)$.  During the construction of~$\Gamma$, we
  maintain the following four invariants:

  \begin{enumerate}[$(J_1)$]
  \item \label{ip1} $\Gamma(H)$ is an \SC{1}-layout that preserves the
    planar embedding of $G$, and each edge is drawn either as an axis-parallel line
    segment or as a quarter-circle in $\Gamma(H)$.  (Note that we do
    not use semi- and 3/4-circles.)

  \item \label{ip3} For each vertex $u$ of $H$, the free ports of $u$
    in $\Gamma(H)$ are consecutive around $u$, and they point to the
    outer face of $\Gamma(H)$.

  \item \label{ip4} Vertices with remaining degree exactly~2 are
    incident to an edge drawn as a quarter-circle.

  \item \label{ip5} If an edge $(u,v)$ is drawn as an axis-parallel
    segment, then at least one of $u$ and $v$ has remaining degree at
    most~1.  If $(u,v)$ is vertical and $y(u) < y(v)$, then $u$ has
    remaining degree at most~1 and the free port of $u$ in $\Gamma(H)$
    is horizontal; see Figs.~\ref{fig:outer-sc1}a, \ref{fig:outer-sc1}d
    and \ref{fig:outer-sc1}g.  Symmetrically,
    if $(u,v)$ is horizontal and $x(u) < x(v)$, then $u$ has remaining
    degree at most~1 and the free port of $u$ in $\Gamma(H)$ is
    vertical; see Figs.~\ref{fig:outer-sc1}b, \ref{fig:outer-sc1}e
    and \ref{fig:outer-sc1}h.
  \end{enumerate}

  \begin{figure}[tb]
    \centering
    \includegraphics[width=\textwidth]{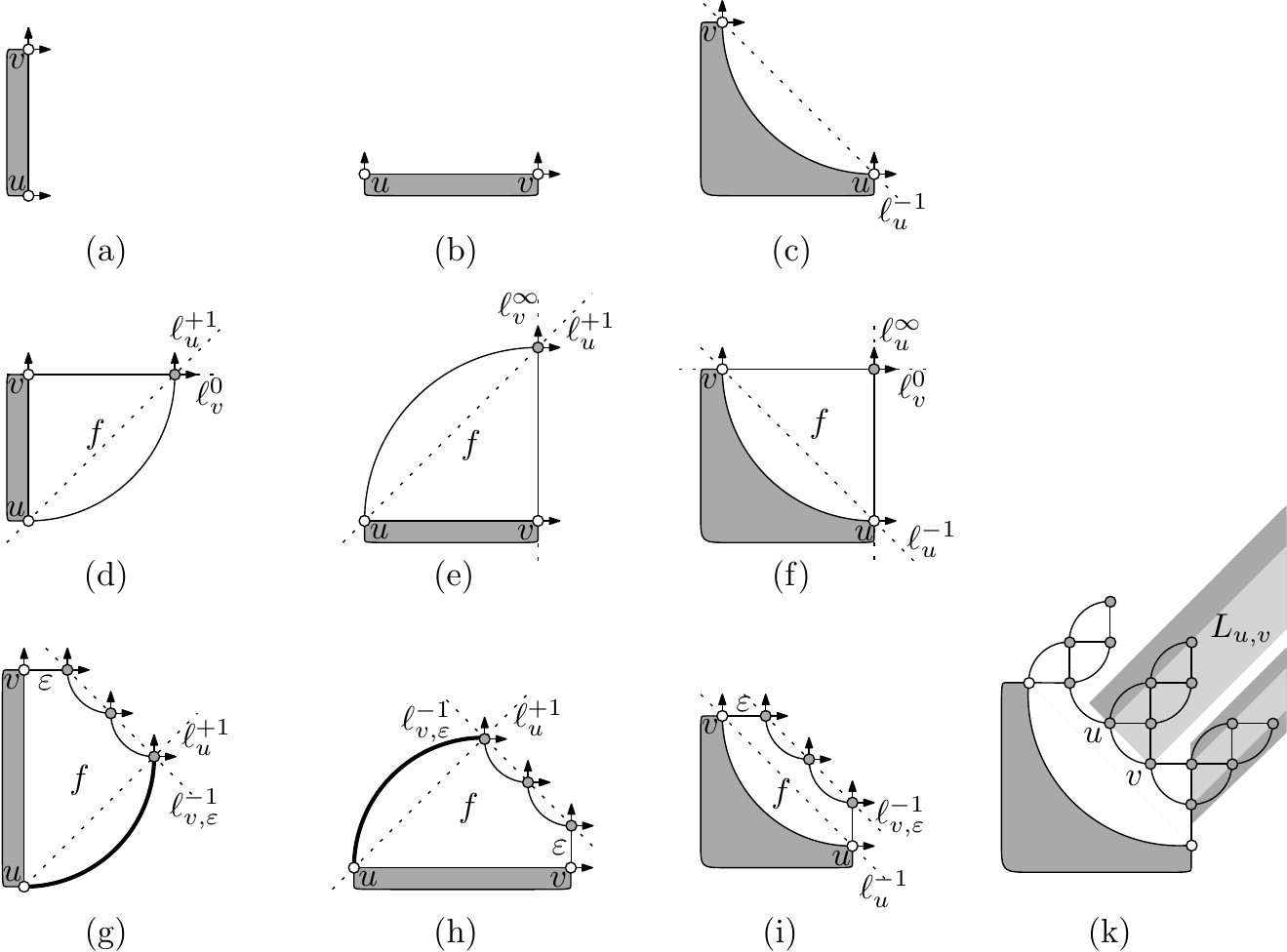}
    \caption{(a)-(i)~Different cases that arise when drawing face $f$ of $G$. (k)~A sample drawing.}
    \label{fig:outer-sc1}
  \end{figure}

  We now show how we add the drawing of the new face $f$ to
  $\Gamma(H)$.  Since~$G$ is biconnected and outerplanar, and due to
  the order in which we process the faces of~$G$, $f$ has exactly two
  vertices, say $u$ and $v$, which have already been drawn (as~$p_u$
  and~$p_v$).  The two
  vertices are adjacent.  Depending on how the edge $(u,v)$ is drawn
  in $\Gamma(H)$, we draw the remaining vertices and edges of $f$.

  Let $k \geq 3$ be the number of vertices on the boundary of~$f$.
  The slopes of the line segment~$\overline{p_up_v}$ is in
  $\{-1,0,+1,\infty\}$, where $\infty$ means that $\overline{p_up_v}$ is
  vertical.  For $s \in \{-1,0,+1,\infty\}$, we denote by $\ell_u^s$
  the line with slope~$s$ through~$p_u$.  Similarly, we denote by
  $\ell_{u,\eps}^s$ the line with slope~$s$ through the point
  $(x(u)+\eps,y(u))$, for some $\eps>0$. % Then,
  Figs.~\ref{fig:outer-sc1}d--\ref{fig:outer-sc1}f show the
  drawing of $f$ for $k=3$, and
  Figs.~\ref{fig:outer-sc1}g--\ref{fig:outer-sc1}i for any $k \geq
  4$. %%% (we used $k=5$).

  Note that the lengths of the line segments and the radii of the
  quarter-circles that form $f$ are equal (except for the radii of the
  bold-drawn quarter-circles of Figs.~\ref{fig:outer-sc1}g and
  \ref{fig:outer-sc1}h which are determined by the remaining edges of
  $f$). Hence, the lengths of the line segments and the radii of the
  quarter-circles that form any face that is descendant of face~$f$
  in~$T$ are smaller than or equal to the lengths of the line segments
  and the radii of the quarter-circles that form $f$. Our construction
  ensures that all vertices of the subgraph of $G$ induced by the
  subtree of $T$ rooted at $f$ lie in the interior or on the boundary
  of the diagonal semi-strip~$L_{uv}$ delimited by~$\ell_{u}^{+1}$,
  $\ell_{v}^{+1}$, and $\overline{p_up_v}$ (see Fig.~\ref{fig:outer-sc1}k).  The only edges of this
  subgraph that are drawn in the complement of~$L_{uv}$ (and are
  potentially involved in crossings) are incident to two vertices that
  both lie on the boundary of~$L_{uv}$.  In this particular case,
  however, the degree restriction implies that~$L_{uv}$ is surrounded
  from above and/or below by two empty diagonal semi-strips of at least
  %% Don't we need "at most" here -- in order to avoid intersections
  %% between "neighboring" strips? MB: I thinks it's at-least
  half the width of semi-strip~$L_{uv}$, which is enough to ensure
  planarity for two reasons.
  First, any face that is descendant of face $f$ in $T$ is formed by line
  segments and quarter-circles of radius that are at most as big as the
  corresponding ones of face $f$.  Second, due to the degree
  restrictions, if two neighboring children of~$f$ are triangles, the
  left one cannot have a right child and vice versa.

  Let us summarize.  Fig.~\ref{fig:outer-sc1}d--\ref{fig:outer-sc1}i
  show that the drawing of $f$ ensures that
  invariants~($J_{\ref{ip1}}$)--($J_{\ref{ip5}}$) of our algorithm are
  satisfied for $H\cup\{f\}$.  We begin by drawing the root
  face~$f^*$. Since~$G$ is $(2,3)$-restricted, $f^*$ has two
  vertices~$x$ and~$y$ consecutive on the outer face with $\deg(x)=2$
  and $\deg(y) \leq 3$.  We draw edge $(x,y)$ as a vertical line
  segment.  Then the remaining degrees of~$x$ and~$y$ are 1 and 2,
  respectively, which satisfies the invariants for face~$f^*$.  Hence,
  we complete the drawing of~$f^*$ as in Fig.~\ref{fig:outer-sc1}d
  or~\ref{fig:outer-sc1}g.  Traversing~$T$ in pre-order, we complete
  the drawing of~$G$.
\end{pf}

%%
% Move to conclusion?
%%
%Note that, the produced drawing of $G$ proves that producing such
%drawings is possible. The drawing itself is not particularly
%appealing since it may require exponential area (while for
%outerplanar graphs polynomial drawing area might be enough).
%However, it is a starting point towards algorithms that produce
%better layouts. Also

% Note that although the algorithm performs a linear number of
% ``point computations'', the coordinates of some vertices may
% be non-rational. % due to square root of two computations.
% Hence the drawing might take exponential area.
Next, we show how to
% the algorithm
%from Theorem~\ref{thm:2-3-restricted}
deal with general biconnected 4-outerplane graphs.
% We now generalize our algorithm to a general biconnected outerplanar
% graph $G$ with maximum degree 4.
Suppose $G$ is not $(2,3)$-restricted.  As the
following lemma asserts, we can always construct a biconnected $(2,3)$-restricted
4-outerplane graph by deleting a vertex of degree~2
from~$G$.

\begin{lemma}
  \label{lem:make-2-3-restricted}
  Let $G=(V,E)$ be a biconnected 4-outerplane graph that is not
  $(2,3)$-restricted.  Then $G$ has a degree-2 vertex whose
  removal yields a $(2,3)$-restricted biconnected 4-outerplane graph.
\end{lemma}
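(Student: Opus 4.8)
The plan is to reduce the statement to a single combinatorial fact about the outer cycle and then locate the right degree-2 vertex by a degeneracy argument. Since $G$ is biconnected and outerplane, its outer face is bounded by a Hamiltonian cycle $C$, every chord splits off a bounded face, and the weak dual is a tree $T$. First I would unfold the hypothesis: \emph{not $(2,3)$-restricted} means that along $C$ every degree-2 vertex is flanked by two vertices of degree exactly~4. In particular $G$ is not a single cycle, so $T$ has at least two leaves. Each leaf face is a triangle: otherwise its outer boundary path would contain two consecutive internal vertices, and these would be adjacent degree-2 vertices, contradicting the hypothesis. Hence there exist \emph{triangular ears}, i.e.\ degree-2 vertices $v$ whose two neighbors $a,b$ are joined by a chord $ab$.

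Next I would record two facts. (i)~For a degree-2 vertex $v$ with neighbors $a,b$, the graph $G-v$ is again biconnected if and only if $ab\in E$: the ``if'' is immediate, and for the ``only if'' note that two internally disjoint $a$--$b$ paths in $G-v$ together with $a$--$v$--$b$ would form a theta subgraph with three paths of length at least two, which is not outerplanar. Thus only triangular-ear tips may be deleted. (ii)~If $v$ is a triangular-ear tip, then $a,b$ have degree~4 in $G$, hence degree~3 in $G-v$, while all other degrees are unchanged and $ab$ becomes an outer edge. Since $G$ itself is not $(2,3)$-restricted, a $(2,3)$-pattern can appear in $G-v$ only at $a$ or $b$, and a short check shows that this happens exactly when one of the two \emph{second neighbors} $a'$, $b'$ (the outer-cycle neighbor of $a$, resp.\ $b$, other than $v$) has degree~2. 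So it suffices to find a triangular-ear tip one of whose second neighbors is a degree-2 vertex --- equivalently, two degree-2 vertices at distance exactly two on $C$.

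To produce such a pair I would argue by contradiction, assuming that no two degree-2 vertices lie at distance two on $C$. Under this assumption every vertex has at most one degree-2 neighbor (its two outer-cycle neighbors cannot both be degree~2, and degree-2 vertices carry no chords), while the hypothesis already forbids any degree-3 vertex from being adjacent to a degree-2 vertex. Deleting the set $D$ of all degree-2 vertices therefore lowers each surviving degree-3 vertex by $0$ and each surviving degree-4 vertex by at most~$1$, so the outerplanar graph $G-D$ --- which is nonempty, as $G$ is not a cycle --- would have minimum degree at least~$3$. This contradicts the fact that every nonempty outerplanar graph is $2$-degenerate and hence has a vertex of degree at most~$2$. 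Consequently two degree-2 vertices at distance two must exist.

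The step I expect to be the real obstacle is gluing (i) and the last paragraph together: the degeneracy argument yields two degree-2 vertices at distance two, but to \emph{delete} one of them and stay biconnected I must know it is a triangular-ear tip. The crux is therefore the structural claim that, under the hypothesis, \emph{every} degree-2 vertex is the apex of a triangular face, so that ``distance two'' automatically refers to two triangular ears sharing a base. This is where the degree-$4$ condition on the neighbors is used decisively: if a degree-2 vertex $w$ sat on a face of size at least~$4$, then forcing both of its neighbors up to degree~$4$ would require additional chords for which the surrounding structure leaves no room without creating a degree-2 vertex adjacent to a vertex of degree at most~$3$. Making this ``no room'' analysis precise --- presumably by an innermost-face or induction-on-subpolygon argument --- is the delicate part; once it is established, choosing $v$ to be the triangular-ear tip guaranteed by the degeneracy argument completes the proof, since then $G-v$ is biconnected, 4-outerplane, and $(2,3)$-restricted.
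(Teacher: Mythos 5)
Your preparatory steps are all correct: fact~(i) (for a degree-2 vertex $v$ with neighbors $a,b$, the graph $G-v$ is biconnected if and only if $ab\in E$, via the $K_{2,3}$-subdivision argument), fact~(ii) (for such a ``triangular-ear tip'', $G-v$ is $(2,3)$-restricted if and only if one of the two second neighbors of $v$ has degree~2), and the degeneracy argument producing two degree-2 vertices at distance exactly two on the outer cycle $C$. The gap is exactly the step you flag yourself, and unfortunately it is not merely delicate---the claim is \emph{false}: in a biconnected 4-outerplane graph that is not $(2,3)$-restricted, a degree-2 vertex need not be the apex of a triangular face. Counterexample: take the outer cycle $w,q,x_1,x_2,x_3,u_1,u_2,y_1,y_2,y_3,p$ (in this cyclic order) and add the chords $qu_1$, $qx_2$, $x_2u_1$, $pu_2$, $u_2y_2$, $y_2p$. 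All degrees are at most~4, the degree-2 vertices are $w,x_1,x_3,y_1,y_3$, and each of them has both neighbors of degree~4, so the graph is not $(2,3)$-restricted; yet $w$ lies on the pentagonal face $(p,w,q,u_1,u_2)$ and its neighbors $p,q$ are non-adjacent, so by your own fact~(i) the vertex $w$ cannot be deleted. (The lemma itself survives here because, e.g., $x_1$ is deletable.) In particular, your asserted equivalence between ``ear tip with a degree-2 second neighbor'' and ``two degree-2 vertices at distance two on $C$'' holds in one direction only.

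Worse, the natural repair---showing that \emph{some} distance-two pair contains an ear tip---also fails, because two such ``pentagon apexes'' can share their middle vertex. Take the outer cycle $z,v_2,p_2,d_1,d_2,d_3,b_2,b_1,a_1,a_2,c_1,c_2,c_3,p_1,v_1$ with chords $za_1$, $zb_1$, $p_2b_2$, $p_2d_2$, $b_2d_2$, $a_2p_1$, $a_2c_2$, $p_1c_2$. This graph is biconnected, 4-outerplane and not $(2,3)$-restricted (the degree-2 vertices $v_1,v_2,d_1,d_3,c_1,c_3$ all have only degree-4 neighbors, and the two degree-3 vertices $a_1,b_1$ have no degree-2 neighbors), and $v_1,v_2$ form a distance-two pair through $z$ in which \emph{neither} vertex is an ear tip, since $p_1z\notin E$ and $p_2z\notin E$. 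So the pair delivered by your counting argument can be entirely useless, and the proof cannot be finished along this route: one needs a different mechanism for locating the deletable vertex. That is what the paper's proof supplies. It inducts on the number of vertices: a leaf face of the weak dual tree is always a triangle whose apex $b$ is an ear tip with degree-4 base vertices $a,c$; if deleting $b$ creates a $(2,3)$-pattern (equivalently, by your fact~(ii), if $a$ or $c$ has another degree-2 neighbor), then $b$ is the desired vertex; otherwise the induction hypothesis applied to $G-b$ returns a degree-2 vertex $u$ not adjacent to $a$ or $c$, and one checks that $u$ also works in $G$. Your facts~(i) and~(ii) would combine cleanly with such an inductive search, but not with the distance-two-pair argument.
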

\begin{pf}
  The proof is by induction on the number of vertices. The base case
  is a maximal biconnected outerplane graph on six vertices, which is the only
  non-$(2,3)$-restricted graph with six or less vertices.  It is easy to
  see that in this case the removal of any degree-2 vertex yields a biconnected
  $(2,3)$-restricted 4-outerplane graph. Now assume that the
  hypothesis holds for any biconnected 4-outerplane graph with $k \geq 6$
  vertices. Let $G_{k+1}$ be a biconnected 4-outerplane graph on $k+1$ vertices,
  which is not $(2,3)$-restricted.  Let $\mathcal{F}$ be a face of
  $G_{k+1}$ that is a leaf in its weak dual. Then $\mathcal{F}$
  contains only one internal edge and exactly two external edges
  since, if it contained more than two external edges, $G_{k+1}$ would
  be $(2,3)$-restricted.  Therefore, $\mathcal{F}$ consists of
  three vertices, say $a$, $b$ and $c$, consecutive on the outer face
  and $\deg(a)=\deg(c)=4$, since otherwise $G_{k+1}$ would be
  $(2,3)$-restricted.  By removing $b$, we obtain a new graph, say
  $G_k$, on $k$ vertices.  If $a$ or $c$ is incident to a degree-2
  vertex in $G_k$, then $G_k$ is $(2,3)$-restricted.  Otherwise, by our
  induction hypothesis, $G_k$ has a degree-2 vertex whose removal
  yields a $(2,3)$-restricted outerplanar graph. Since this vertex is
  neither adjacent to $a$ nor $c$, the removal of this vertex
  makes $G_{k+1}$, too, $(2,3)$-restricted.
\end{pf}

\begin{theorem}
  \label{thm:biconnected-outerplanar}
  Any biconnected 4-outerplane graph admits an \SC1-layout.% without
%  semi- and 3/4-circles.
\end{theorem}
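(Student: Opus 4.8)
The plan is to obtain the theorem from the two preceding lemmas by one vertex-deletion step. If the given graph $G$ is $(2,3)$-restricted, then Lemma~\ref{lem:2-3-restricted} already yields an \SC1-layout. Otherwise, Lemma~\ref{lem:make-2-3-restricted} provides a degree-2 vertex $z$ such that $G' := G - z$ is again biconnected, 4-outerplane, and now $(2,3)$-restricted; moreover, following the proof of Lemma~\ref{lem:make-2-3-restricted}, $z$ may be taken to lie in a triangular face $pzq$ of $G$, so that $(p,q)$ is an edge of $G$ and of $G'$, and in $G'$ both $p$ and $q$ have at most three incident edges, hence at least one free port each. Since $pzq$ is a leaf face of the weak dual of $G$ and $z$ sits on the outer cycle, removing $z$ merges the interior of the triangle with the outer face, so in $G'$ the edge $(p,q)$ lies on the outer face.

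I would then draw $G'$ with \SC1 using the algorithm of Lemma~\ref{lem:2-3-restricted} and re-insert $z$ together with the two edges $(p,z)$ and $(q,z)$. Whatever face of the weak dual tree of $G'$ we root at, the bounded face $f_0$ of $G'$ incident to $(p,q)$ is never entered through $(p,q)$ (the other side of $(p,q)$ being the outer face), so $(p,q)$ is drawn by the explicit rules of Lemma~\ref{lem:2-3-restricted}, as an axis-parallel segment or a quarter-circle, and by invariant~($J_{\ref{ip3}}$) the free ports of $p$ and $q$ in the finished drawing $\Gamma(G')$ point into the outer face, on the side of $(p,q)$ opposite to $f_0$ --- precisely the side where $z$ has to be placed. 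As in the proof of Lemma~\ref{lem:2-3-restricted}, I would place $z$ at the ``diagonal corner'' of the empty semi-strip bounded by $\overline{p_pp_q}$ and the two slope-$(\pm1)$ lines through $p_p$ and $p_q$ on that side, and draw $(p,z)$ and $(q,z)$ using one quarter-circle (or axis-parallel segment) each, with a common axis-parallel tangent at $z$ --- exactly as a triangular face is drawn in Fig.~\ref{fig:outer-sc1}d,f. Since that semi-strip was part of the outer face of $\Gamma(G')$ and holds no other vertex or edge, and since $z$ is drawn with complexity~$1$, the result is a planar \SC1-layout of $G$.

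The step I expect to be the real obstacle is exactly this re-insertion: one must check, over the handful of shapes that Lemma~\ref{lem:2-3-restricted} produces for an outer edge $(p,q)$, that the leftover free ports of $p$ and of $q$ are always oriented so that both $(p,z)$ and $(q,z)$ can be realized by a single quarter-circle or segment with a common endpoint $z$, without crossings. This is where the explicit case analysis for drawing a face in Lemma~\ref{lem:2-3-restricted} (Fig.~\ref{fig:outer-sc1}d--i), invariant~($J_{\ref{ip3}}$), and the emptiness of the diagonal semi-strip around $(p,q)$ must be combined; everything else is a direct appeal to Lemmas~\ref{lem:2-3-restricted} and~\ref{lem:make-2-3-restricted}.
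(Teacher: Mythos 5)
Your skeleton is exactly the paper's: if $G$ is $(2,3)$-restricted, apply Lemma~\ref{lem:2-3-restricted}; otherwise remove the degree-2 vertex supplied by Lemma~\ref{lem:make-2-3-restricted}, draw the resulting graph $G'$ by Lemma~\ref{lem:2-3-restricted}, and re-insert the removed vertex $z$ with its two edges into the outer face. The gap is precisely the step you yourself flag as ``the real obstacle'': you never carry out the port/shape analysis for the re-insertion, and the construction you sketch for it is not justified. Drawing the triangle on $(p,q)$ ``exactly as in Fig.~\ref{fig:outer-sc1}d,f'' is legitimate inside the algorithm only because invariants $(J_{\ref{ip4}})$ and $(J_{\ref{ip5}})$ prescribe which ports of the base edge's endpoints are still free and how they are oriented; those invariants prepare the side of $(p,q)$ facing the not-yet-drawn internal faces. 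Here, however, $z$ must be inserted on the \emph{opposite}, outer side of the \emph{finished} drawing $\Gamma(G')$, where $p$ and $q$ (of degree~3 in $G'$, since they have degree~4 in $G$) each have exactly one leftover free port whose direction is forced by the rest of the drawing rather than chosen, and no invariant of Lemma~\ref{lem:2-3-restricted} says anything about it beyond $(J_{\ref{ip3}})$ (it points into the outer face). In particular, when $(p,q)$ is drawn as a quarter-circle, you give no argument that the two forced ports admit a common apex $z$ with both $(p,z)$ and $(q,z)$ realized as quarter-circles or axis-parallel segments; the paper's own proof abandons that repertoire in exactly this case and draws one of the two new edges as a 3/4-circle arc (from the upper neighbor to $z$) and the other as a quarter-circle, which indicates that your restricted claim fails there. (When $(p,q)$ is an axis-parallel segment, a segment plus a quarter-circle does suffice, as you suggest.)

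The missing idea that makes the completion short is that an \SC1-layout may use semi- and 3/4-circles: invariant $(J_{\ref{ip1}})$ excludes them only during the inductive construction of Lemma~\ref{lem:2-3-restricted}, not in the final drawing. With that extra freedom, the re-insertion reduces to the paper's two-case analysis on the shape of $(p,q)$ (quarter-circle versus axis-parallel segment), using $(J_{\ref{ip3}})$ and the emptiness of the outer region next to $(p,q)$ for planarity. As written, your proof defers exactly this verification, so it is incomplete; to finish it you should either reproduce that small case analysis with the enlarged repertoire or prove your ``quarter-circles only, $z$ at the diagonal corner'' claim, which as stated does not hold in the quarter-circle case.
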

\begin{pf}
  If the given graph~$G$ is $(2,3)$-restricted, then the result
  follows from Lemma~\ref{lem:2-3-restricted}.  Thus, assume that $G$
  is not $(2,3)$-restricted. Then, $G$ contains a degree-2 vertex, say
  $b$, whose removal yields a biconnected $(2,3)$-restricted 4-outerplane graph,
  say $G'$. Hence, we can apply the algorithm of
  Lemma~\ref{lem:2-3-restricted} to $G'$ and obtain an outerplanar
  \SC{1}-layout $\Gamma(G')$ of $G'$. Since this algorithm always
  maintains consecutive free ports for each vertex and the neighbors
  of $b$ are on the outer face of $\Gamma(G')$, we insert
  insert~$b$ and its two incident edges to obtain an \SC1-layout
  $\Gamma(G)$ of~$G$ as follows.  Let~$a$ and~$c$ be the neighbors
  of $b$ and assume w.l.o.g.\ that $c$ is drawn above $a$. If edge
  $(a,c)$ is drawn as a quarter-circle, then a 3/4-circle arc
  from~$p_c$ to $p_b$ and a quarter-circle from~$p_b$ to~$p_a$
  suffice.  Otherwise, line segment $\overline{p_ap_b}$ and a
  quarter-circle from~$p_b$ to~$p_c$ do the job.
\end{pf}

% ============================================================================
\section{Conclusions and Open Problems}
\label{sec:conclusions}
% ============================================================================

In this paper, we presented several new results about smooth
orthogonal drawings of 4-planar graphs.  Many problems remain open,
for example:
\begin{enumerate}
% \item Can we obtain more appealing smooth orthogonal drawings for
%  4-\emph{outer}planar graphs?
\item Can all 4-planar graphs be drawn in polynomial area with \SC2?
  We have shown this only for 3-planar graphs.
\item Identify larger classes of graphs admitting \SC1-layouts, e.g.,
  do all (not necessarily biconnected) 4-outerplanar or all 3-planar
  graphs admit \SC1-layouts?
\item We strongly conjecture that it is NP-hard to decide whether a
  4-planar graph has an \SC1-layout, but we struggled with some details
  in our attempt for a proof.
\end{enumerate}

\bibliographystyle{abbrv}
%{
%\begin{small}
\bibliography{abbrv,lncs,stephen,maplab,smooth,lombardi,grid_drawing}
%\end{small}
%}

\newpage
\appendix

%\section{Drawing of the edges in Theorem~\ref{thm:bicon-sc2}}\label{app:edge}

%\includegraphics[width=\textwidth]{}

% ============================================================================
\section{A Lower Bound for the Area Requirement of \SC1-Layouts}
\label{sec:area}
% ============================================================================

In this section, we demonstrate an infinite family of 4-planar
graphs that require exponential area if they are drawn with \SC1.
Bekos et al.~\cite{BKKS12} presented such a family of graphs for the
rather restricted setting where both the embedding of the graph and
the port assignment of the edges are fixed. Here, we strengthen this
result. Consider the graph shown in Fig.~\ref{fig:exponarea-1}. This
graph consists of several layers.  Each layer consists of a cycle of
four pairs of adjacent triangles. The \SC1-layout of this graph in
Fig.~\ref{fig:exponarea-2} obviously requires exponential area since
every layer uses more than twice the area of the previous layer. We
will now show that this is the only \SC1-layout of the graph, up to
translation, rotation and scaling.

\begin{figure}[h]
  \hfill
  \subfloat[\label{fig:exponarea-1}]{\includegraphics[page=1]{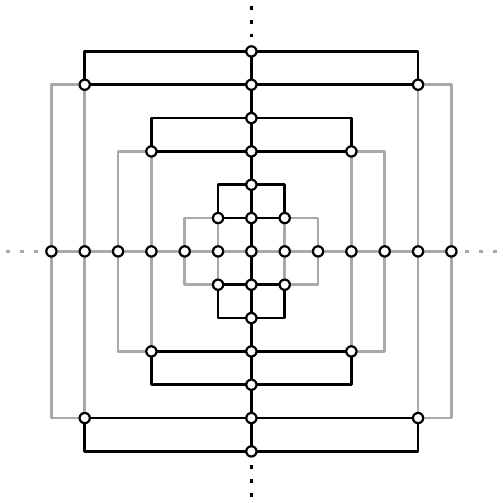}}
  \hfill
  \subfloat[\label{fig:exponarea-2}]{\includegraphics[page=2]{exponarea}}
  \hfill\null

  \caption{(a) A graph with an \OC{2}-layout using polynomial area (left)
    and (b) an \SC{1}-layout using exponential area (right).}
    \label{fig:exponarea}
\end{figure}

First, we show that there are only two ways to draw one of the
triangles of each layer. In Fig.~\ref{fig:alltriangles}, we show all
16 possible ways to get an \SC{1}-layout of a triangle. However, in
our graph all free ports have to lie on the outer face. There are
only two \SC1-layouts of a triangle that have this property, marked
by a
dashed circle. %
%\todo{Figure 9 is given in the first paper on smooth.
%So, if we only mention the possible triangles, and refer to the
%first paper for all different, then we save half a page.}
% -> Let's leave it for convenience of the reader!  Just appendix... (AW)

\begin{figure}[htbp]
  \hfill
  \begin{minipage}[b]{.23\textwidth}
     \centering
     {\includegraphics[page=1]{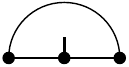}}
  \end{minipage}
  \hfill
  \begin{minipage}[b]{.23\textwidth}
     \centering
     {\includegraphics[page=2]{alltriangles}}
  \end{minipage}
  \hfill
  \begin{minipage}[b]{.23\textwidth}
     \centering
     {\includegraphics[page=3]{alltriangles}}
  \end{minipage}
  \hfill
  \begin{minipage}[b]{.23\textwidth}
     \centering
     {\includegraphics[page=4]{alltriangles}}
  \end{minipage}
  \hfill\null

  \smallskip

  \hfill
  \begin{minipage}[b]{.23\textwidth}
     \centering
     {\includegraphics[page=5]{alltriangles}}
  \end{minipage}
  \hfill
  \begin{minipage}[b]{.23\textwidth}
     \centering
     {\includegraphics[page=6]{alltriangles}}
  \end{minipage}
  \hfill
  \begin{minipage}[b]{.23\textwidth}
     \centering
     {\includegraphics[page=7]{alltriangles}}
  \end{minipage}
  \hfill
  \begin{minipage}[b]{.23\textwidth}
     \centering
     {\includegraphics[page=8]{alltriangles}}
  \end{minipage}
  \hfill\null

  \begin{minipage}[b]{.23\textwidth}
     \centering
     {\includegraphics[page=9]{alltriangles}}
  \end{minipage}
  \hfill
  \begin{minipage}[b]{.23\textwidth}
     \centering
     {\includegraphics[page=10]{alltriangles}}
  \end{minipage}
  \hfill
  \begin{minipage}[b]{.23\textwidth}
     \centering
     {\includegraphics[page=11]{alltriangles}}
  \end{minipage}
  \hfill
  \begin{minipage}[b]{.23\textwidth}
     \centering
     {\includegraphics[page=12]{alltriangles}}
  \end{minipage}
  \hfill\null

  \begin{minipage}[b]{.23\textwidth}
     \centering
     {\includegraphics[page=13]{alltriangles}}
  \end{minipage}
  \hfill
  \begin{minipage}[b]{.23\textwidth}
     \centering
     {\includegraphics[page=14]{alltriangles}}
  \end{minipage}
  \hfill
  \begin{minipage}[b]{.23\textwidth}
     \centering
     {\includegraphics[page=15]{alltriangles}}
  \end{minipage}
  \hfill
  \begin{minipage}[b]{.23\textwidth}
     \centering
     {\includegraphics[page=16]{alltriangles}}
  \end{minipage}
  \hfill\null

  \medskip

  \caption{All possible ways to get an \SC{1}-layout of a
    triangle. Only two of these drawings (enclosed by dashed red
    circles) have all their ports on the outer face.}
    \label{fig:alltriangles}

  \bigskip

   \hfill
  \includegraphics[page=2]{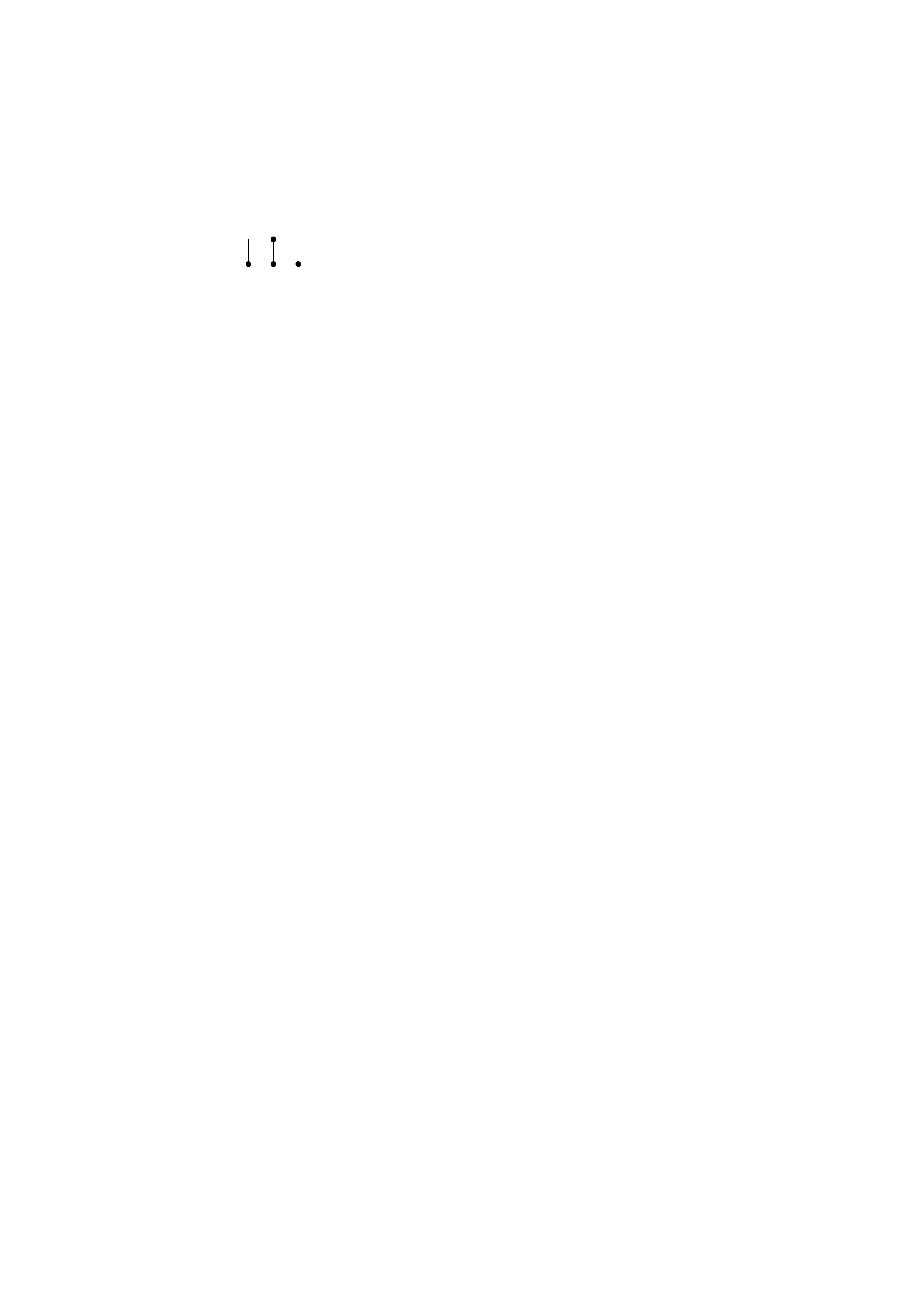}
  \hfill
  \includegraphics[page=3]{trianglecomb}
  \hfill
  \includegraphics[page=4]{trianglecomb}
  \hfill\null

  \caption{There are three ways to draw two adjacent triangles.}
  \label{fig:trianglecomb}

  \bigskip

  \centering
  \begin{tabularx}{\textwidth}{@{}rXrXr@{}}
    \includegraphics[page=1]{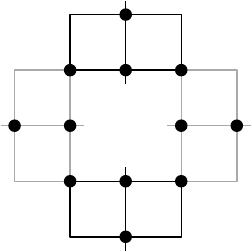} &&
    \includegraphics[page=2]{exponconstruct} &&
    \includegraphics[page=3]{exponconstruct} \\[-3.1ex]
    (a) && (b) && (c)
  \end{tabularx}

  \medskip

  \caption{(a) One layer of the graph in Fig.~\ref{fig:exponarea}. (b) \& (c)
    The only two ways to draw the subgraph depicted in (a) with \SC{1}.}
  \label{fig:exponconstruct}
\end{figure}

Next, we build a pair of adjacent triangles. In
Fig.~\ref{fig:trianglecomb}, we show that there are three ways to
combine two triangles that share an edge. Finally, we combine four
pairs of adjacent triangles to one layer of the graph. Using careful
case analysis, it can be shown that there are only two ways to draw
one of the layers with \SC1; see Fig.~\ref{fig:exponconstruct}.
However, it is easy to see that it is impossible to connect the
drawing shown in Fig.~\ref{fig:exponconstruct}c to another layer.
Thus, the \SC1-layout shown in Fig.~\ref{fig:exponarea-2} is the
only way to draw this graph, which proves the following theorem.

\begin{backInTime}{exponential-area}
\begin{theorem} %\label{thm:exp}
  There is an infinite family of graphs that require exponential
  area if they are drawn with \SC1.
\end{theorem}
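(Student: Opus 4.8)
The plan is to exhibit an explicit family $\{G_m\}_{m\ge 1}$ of 4-planar graphs in which $G_m$ consists of $m$ nested \emph{layers}, each layer being a cyclic arrangement of four pairs of edge-sharing triangles as in Fig.~\ref{fig:exponconstruct}a, and to prove that every \SC1-layout of $G_m$ is forced up to translation, rotation, reflection and scaling. Once this uniqueness is established, a short estimate shows that each layer must enclose the previous one at more than twice the linear scale, so any \SC1-layout of $G_m$ needs area $2^{\Omega(m)}$ while $G_m$ has only $\Theta(m)$ vertices; since $G_m$ itself admits an \OC2-layout of polynomial area (Fig.~\ref{fig:exponarea-1}), this proves the theorem.

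First I would classify all \SC1-drawings of a single triangle. Each edge, using only orthogonal ports and a single segment, is either an axis-parallel segment or a quarter-circle, so enumerating the admissible port combinations at the three vertices yields the $16$ drawings of Fig.~\ref{fig:alltriangles}. The crucial observation is that in $G_m$ every triangle has two degree-4 vertices whose free ports must point toward the outer face of the partially constructed drawing; exactly two of the $16$ triangle drawings (those circled in Fig.~\ref{fig:alltriangles}) have all free ports on one side, so in $G_m$ each triangle is drawn in one of only two ways up to congruence and scale.

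Next I would glue two triangles along a shared edge: composing the two admissible single-triangle drawings in all orientations that keep the remaining ports outward gives precisely the three configurations of Fig.~\ref{fig:trianglecomb}. Then comes the core of the argument — chaining four such triangle-pair gadgets around the cyclic boundary of one layer and requiring the outward ports (through which the next layer attaches) to be consistently placed. A finite but careful case analysis then shows that a layer admits exactly the two realizations of Fig.~\ref{fig:exponconstruct}b,c. Finally I would observe that realization~(c) cannot be extended: its outward-facing ports are oriented so that no enclosing layer can be attached without a crossing, whereas realization~(b) forces the $(m{+}1)$-st layer to be a strictly more-than-doubled scaled copy surrounding the $m$-th one. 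Iterating this over all $m$ layers yields the exponential-area bound.

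I expect the main obstacle to be the case analysis in the layer step: ruling out every cyclic combination of the three triangle-pair gadgets that fails to close up into a valid, extendable layer. This requires tracking, for each of the four gadgets, the directions of its free ports and which of its arcs bounds the inner versus the outer region of the layer, and then verifying that each non-admissible combination either produces an edge crossing, demands a non-orthogonal port, or leaves free ports that cannot be routed to the enclosing layer. By comparison, the triangle enumeration, the two-triangle enumeration, and the final scaling estimate are routine.
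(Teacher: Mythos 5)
Your proposal follows essentially the same route as the paper's own proof: the same layered family of four pairs of edge-sharing triangles, the enumeration of the $16$ single-triangle \SC1-drawings of which only two keep all free ports outward, the three ways of gluing two triangles along a shared edge, the case analysis yielding exactly two layer realizations with one of them non-extendable, and the resulting more-than-doubling of scale per layer. No substantive differences to report.
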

\end{backInTime}

\section{Biconnected Graphs without  \SC1-Layouts}
\label{sec:sc2}

In this section, we demonstrate an infinite family of biconnected
4-planar graphs that admit \OC2-layouts, but do not admit
\SC1-layouts. Bekos et al.~\cite{BKKS12} presented such a family of
graphs assuming a rather restricted setting in which the choice of
the outerface is fixed and always corresponds to a triangle.  Here,
we strengthen this results by providing an infinite family of
biconnected 4-planar graphs that admit no \SC1-layout in any
embedding. We start with the following lemma.

\begin{lemma}\label{lem:sc2}
 There exists a biconnected 4-planar graph that admits
  an \OC2-layout, but does not admit an \SC{1}-layout.
\end{lemma}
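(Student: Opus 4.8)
The plan is to exhibit one explicit biconnected 4-planar graph $G$ — a small, rigid gadget on a handful of vertices, built around a degree-4 vertex surrounded by triangular faces plus one or two ``extra'' chords — together with a single picture of an \OC2-layout of it, and then to prove by case analysis that $G$ has no \SC1-layout. Taking $G$ to be 3-connected is convenient, since then its planar embedding is unique up to reflection and one need not range over embeddings; the \OC2 part is then routine, so the whole content sits in the lower bound.

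For the lower bound I would first record the two combinatorial facts that drive the analysis. The first is a classification of \SC1 edges: such an edge is a single axis-parallel segment, a quarter-circle, a semicircle, or a three-quarter-circle, and inspecting these four shapes shows that (a)~the chord joining its endpoints $u$ and $v$ always has slope in $\{0,\infty,+1,-1\}$, equivalently $\Delta x\,\Delta y\,(\Delta x-\Delta y)(\Delta x+\Delta y)=0$; (b)~for a fixed pair of ports at $u$ and $v$ only a constant number of realizations are possible, each of them rigid up to scaling; and (c)~consequently, in any \SC1-layout the at most four neighbours of a vertex lie on the four axis-parallel and diagonal lines through it, one per port. The second fact is the enumeration of \SC1-drawings of a single triangle already used in the proof of Theorem~\ref{thm:exp}: there are only a constant number of them.

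With these in hand I would pick in $G$ a degree-4 vertex $v$ lying on several triangular faces, fix — up to the symmetries of the embedding — how the four edges at $v$ leave it, and propagate fact~(a): the rim edges of the incident triangles are forced onto the diagonal lines through $v$, this pins down the second neighbours of $v$, and finally the ``extra'' chord(s) of $G$ are forced to connect two points between which no \SC1 edge fits without passing through $v$ or through another already-placed vertex. Every branch of the analysis then ends either in such a forced crossing or in a port conflict (an edge forced to reuse an occupied port), so $G$ has no \SC1-layout. The infinite family of Theorem~\ref{th:sc2} follows by joining many copies of $G$ along vertices of degree at most~3 into a path or cycle, using that an \SC1-layout of the whole graph would restrict to one of a single copy, and that the gluing can be carried out while keeping biconnectivity, maximum degree~4, and an \OC2-layout.

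The main obstacle is precisely the choice of $G$: it must be rigid enough that the case tree is small and every leaf is a clean contradiction — the opposite of the exponential-area construction of Theorem~\ref{thm:exp}, where each layer deliberately keeps a degree of freedom and the graph therefore \emph{does} admit an \SC1-layout. Concretely one has to close off the ``escape routes'' in which a supposedly straight edge is realized as a semicircle bulging one way or the other, or as a three-quarter- rather than a quarter-circle, and one has to check all port assignments at each degree-4 vertex; making every such branch terminate is the delicate, figure-heavy part, whereas biconnectivity, the degree bound, and the \OC2-layout are immediate from the picture.
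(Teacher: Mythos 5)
There is a genuine gap: the lemma is an existence statement, and its entire content is an explicit gadget together with a verified, exhaustive case analysis; your text defers exactly this (``the main obstacle is precisely the choice of $G$''), so what you give is a programme rather than a proof. The paper's argument hinges on a concrete graph: $G$ of Fig.~\ref{fig:oc2-sc2-1}, which contains two copies of the subgraph $H$ of Fig.~\ref{fig:oc2-sc2-2}; for $H$ with its given embedding one fixes a single edge $e=(1,2)$, runs through its four possible \SC1 shapes (segment, quarter-, semi-, three-quarter-circle, Fig.~\ref{fig:sc2}), and shows each branch cannot be completed. Without such a graph and the corresponding closed case tree, nothing has been established.

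Moreover, the specific shortcuts you commit to are problematic. First, 3-connectivity fixes the rotation system up to reflection but \emph{not} the outer face, and the whole point of this lemma, compared to the earlier fixed-outer-face example of Bekos et al.~\cite{BKKS12}, is to rule out \SC1-layouts over all plane embeddings; you would still have to exclude every choice of outer face. The paper's graph is deliberately not 3-connected: it shows that all embeddings of $H$ with outer cycle $(2,3,4,5)$ are isomorphic (the only separating-pair flips give isomorphic graphs) and then joins two copies of $H$ so that in any embedding of $G$ at least one copy retains that forced outer cycle --- your plan has no counterpart to this step. Second, your ``fact~(c)'' is false as stated: for a fixed port the neighbour does not lie on one prescribed line per port --- a straight segment places it on the axis of the port, a semicircle on the perpendicular axis, and a quarter- or three-quarter-circle on either diagonal --- so the propagation from the degree-4 vertex has far more branches than you budget for, and ``each leaf is a clean contradiction'' cannot be asserted without doing that work. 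Third, the gadget you sketch (a 3-connected, max-degree-4 graph with a degree-4 vertex surrounded by triangles, plus chords, that still admits an \OC2-layout) is itself delicate to realize: the natural candidate of this shape, the octahedron, does not even admit an \OC3-layout, so the existence of a suitable $G$ along your lines is precisely what must be exhibited, not assumed. The extension to the infinite family (Theorem~\ref{th:sc2}) by gluing copies inherits the same embedding issue and is, in the paper, handled simply by lengthening the connecting paths between the two copies of $H$.
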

\begin{pf} Let $G$ be the graph of Fig.~\ref{fig:oc2-sc2-1}. We prove that $G$ has no \SC{1}-layout.
 First, note that $G$ contains two copies of the graph depicted in Fig.~\ref{fig:oc2-sc2-2}. We denote
 this graph by $H$. We first prove that $H$ has no \SC{1}-layout with the given embedding.
 In particular, we show that the subgraph of $H$ induced by the vertices on or inside the black cycle
 cannot be drawn with \SC1.

\begin{figure}[ht]
  \centering
  \subfloat[\label{fig:oc2-sc2-1}]
  {\includegraphics[page=1,scale=1.2]{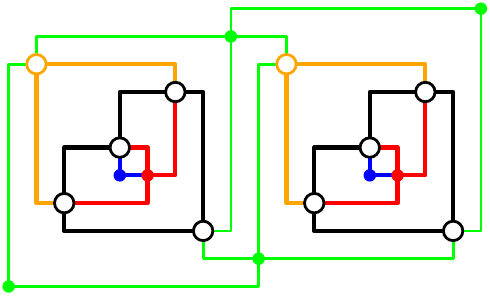}}
  \hfil
  \subfloat[\label{fig:oc2-sc2-2}]
  {\includegraphics[page=1,scale=1.2]{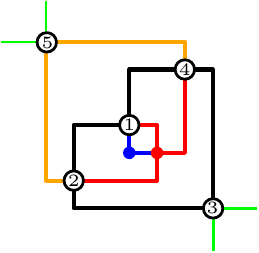}}

  \caption{(a) A graph that admits an \OC2-layout, but does
    not admit an \SC1-layout; (b) the important part of~(a) in detail.}
  \label{fig:oc2-sc2}
\end{figure}

\begin{figure}[ht]
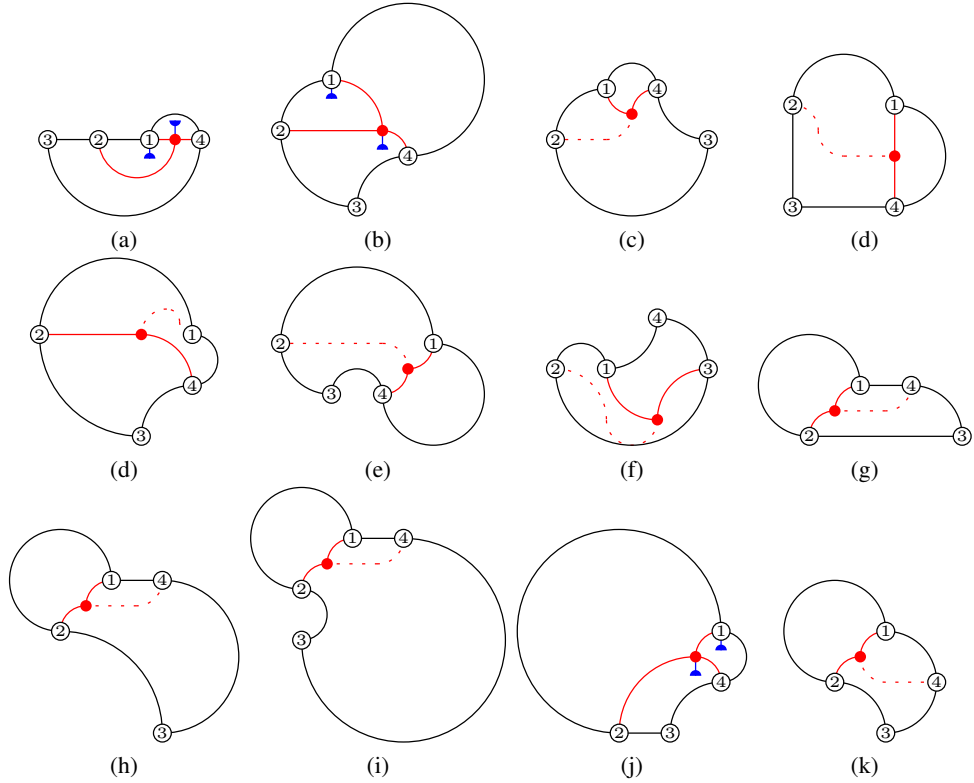

  \begin{tabular}{cccc}
    \includegraphics[page=2,scale=1.2]{sc2-example} &
    \includegraphics[page=3,scale=1.2]{sc2-example} &
    \includegraphics[page=4,scale=1.2]{sc2-example} &
    \includegraphics[page=5,scale=1.2]{sc2-example} \\
    (a) & (b) & (c) & (d) \\
    \includegraphics[page=6,scale=1.2]{sc2-example} &
    \includegraphics[page=7,scale=1.2]{sc2-example} &
    \includegraphics[page=8,scale=1.2]{sc2-example} &
    \includegraphics[page=9,scale=1.2]{sc2-example} \\
    (d) & (e) & (f) & (g) \\
    \includegraphics[page=10,scale=1.2]{sc2-example} &
    \includegraphics[page=11,scale=1.2]{sc2-example} &
    \includegraphics[page=12,scale=1.2]{sc2-example} &
    \includegraphics[page=13,scale=1.2]{sc2-example} \\
    (h) & (i) & (j) & (k) \\
  \end{tabular}
  \caption{Illustration for the proof of Lemma~\ref{lem:sc2}}
  \label{fig:sc2}
\end{figure}

Consider edge $e=(1,2)$ of~$H$. This edge can be drawn as a
straight-line segment, quarter circle, semi-circle or 3/4-circle.
Fig.~\ref{fig:sc2}a illustrates the case where $e$ is drawn as a
horizontal line segment.  In this case, the ports for the edges are
fixed due to the given embedding and it is not possible to complete
the drawing. The case where $e$ is drawn as a vertical segment is
analogous. Similarly, we show that there is no \SC1-layout for $H$
if $e$ is drawn as a quarter-circle in
Figs.~\ref{fig:sc2}b--\ref{fig:sc2}c, as a semi-circle in
Figs.~\ref{fig:sc2}d--\ref{fig:sc2}g and as a 3/4-circle in
Figs.~\ref{fig:sc2}h--\ref{fig:sc2}l.  Thus, there is no
\SC{1}-layout for this fixed embedding of~$H$.

Next, we claim that there is no \SC1-layout for any embedding of $H$
where the vertices~2, 3, 4, and~5 define the outer cycle. Indeed, if
the outerface is fixed, then the only way to find a different
embedding is to find a separating pair $\{u,v\}$ in $H$ and ``flip''
one of the components of $H-\{u,v\}$.  There are two possible
separating pairs in~$H$: (i)~vertex~1 and the red vertex; then the
flip with respect to this pair gives an isomorphic graph due to
symmetry; and (ii)~vertices~2 and~4; then the flip with respect to
this pair again gives an isomorphic graph by interchanging the role
of~3 and~5. Thus, with the fixed outer cycle $(2,3,4,5)$, all
possible embeddings of~$H$ are isomorphic. Since~$G$ contains two
copies of~$H$, in any embedding of~$G$, at least one of the copies
will retain its outer cycle. Hence, there is no \SC{1}-drawing for
any embedding of~$G$.
\end{pf}

Graph $G$ of Fig.~\ref{fig:oc2-sc2-1} uses a few short paths to
connect two copies of $H$.  Obviously, we can add an arbitrary
number of vertices to these paths such that the augmented graph
remains biconnected and 4-planar.  This proves the following
theorem.

\begin{backInTime}{OCtwo-not-SCone}
\begin{theorem}
  %\label{th:sc2}
  There is an infinite family of biconnected 4-planar graphs that
  admit \OC2-layouts but do not admit \SC{1}-layouts.
\end{theorem}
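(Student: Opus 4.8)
The plan is to amplify the single graph from Lemma~\ref{lem:sc2} into an infinite family by subdividing edges. Recall that the graph $G$ of Fig.~\ref{fig:oc2-sc2-1} is built from two copies of the graph $H$ of Fig.~\ref{fig:oc2-sc2-2}, joined to the rest of the graph only along their outer cycles by a few short connecting paths. For each integer $m\ge 1$, let $G_m$ be obtained from $G$ by replacing each connecting edge (or path) by a path with $m$ additional internal vertices. I would first verify that $G_m$ is biconnected and 4-planar: subdividing an edge of a biconnected graph yields a biconnected graph, the new vertices have degree~2 so the maximum degree is unchanged, and subdivision clearly preserves planarity. The graphs $G_1,G_2,\dots$ are pairwise non-isomorphic, so this is an infinite family.

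Next I would show that every $G_m$ admits an \OC2-layout. Start from an \OC2-layout of $G$, which exists by Lemma~\ref{lem:sc2}. Each connecting path is drawn as an orthogonal path of complexity at most~$2$, hence contains an axis-aligned straight segment; after scaling the whole drawing by a factor of, say, $m+1$ so that there is room on the integer grid, I can place the $m$ new degree-2 vertices on such a segment. This neither increases the edge complexity beyond~$2$ nor creates crossings, so $G_m$ has an \OC2-layout.

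Finally I would argue that no $G_m$ admits an \SC1-layout, reusing the argument from the proof of Lemma~\ref{lem:sc2}. The heart of that argument is that $H$ has no \SC1-layout in any embedding in which the $4$-cycle $(2,3,4,5)$ bounds the outer face, since the only separating pairs of $H$ give rise to re-embeddings isomorphic to the original. In $G_m$, exactly as in $G$, each of the two copies of $H$ is attached to the rest of the graph only along the cycle $(2,3,4,5)$; hence in any planar embedding of $G_m$, at least one of the two copies has this cycle on the boundary of the face of $G_m$ into which it is drawn. Restricting a hypothetical \SC1-layout of $G_m$ to that copy would then give an \SC1-layout of $H$ with the forbidden outer face, a contradiction. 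Therefore $G_m$ has no \SC1-layout, which establishes the theorem.

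The main obstacle I anticipate is making the step ``at least one copy of $H$ retains its outer cycle'' precise for \emph{arbitrary} embeddings of $G_m$, rather than just for $G$. The point is that inserting degree-2 vertices along the connecting paths creates no new separating pairs inside a copy of $H$, so any $2$-cut of $G_m$ separating a copy of $H$ from the remainder still has both of its vertices on that copy's cycle $(2,3,4,5)$; combining this with the classification of $H$'s own separating pairs from Lemma~\ref{lem:sc2} pins down the embedding of each copy up to the isomorphisms already handled there. I expect this to be a short but slightly delicate planarity/connectivity argument; everything else is routine.
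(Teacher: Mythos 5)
Your proposal matches the paper's own argument: the paper likewise obtains the infinite family by lengthening the short paths that connect the two copies of $H$ in the graph $G$ of Lemma~\ref{lem:sc2}, noting that this preserves biconnectivity, planarity, maximum degree~4, and the \OC2-layout, while the impossibility of an \SC1-layout still follows because at least one copy of $H$ keeps its outer cycle in any embedding. Your write-up is simply a more detailed version of the same route (the paper dismisses the embedding issue with ``obviously''), so it is correct and essentially identical in approach.
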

\end{backInTime}

% \section{Proof of Theorem~\ref{thm:bi-poly-3}}
% \label{sec:thm-2}
%
% \begin{pf}[Theorem~\ref{thm:bi-poly-3}]
%   \todo{easier: Kant (1 bend) $\to$ stretching)}
%
%   By construction, the algorithm for finding an \SC{2}-layout of a
%   biconnected 4-planar graph draws an outgoing edge on the left port
%   of a vertex $v$ only if it has three outgoing edges. Furthermore, it
%   draws an incoming edge on the right port of $v$ only if it has three
%   incoming edges. By using an $st$-ordering, we make sure that every
%   vertex has at least one incoming and one outgoing edge. Thus, using
%   our algorithm, degree-3 vertices never have an outgoing edge on the
%   left port and never have an incoming edge on the right port. So our
%   algorithm will never produce a C-shaped edge.
%
%   By Lemma~\ref{lem:s-shapes}, we can also eliminate all S-shaped
%   edges.  This leaves our drawing with only straight lines and
%   L-shaped edges.  Instead of using cuts to maintain the invariants,
%   we can now just place each vertex in the same row or one row above
%   the previously placed vertex. Thus, we have at most $n$
%   rows. Afterwards, we stretch the entire drawing horizontally by a
%   factor of $n$.  In this way, we can easily maintain invariant~$I_2$
%   and verify that the resulting drawing is planar because every
%   quarter-circle lies in a square without vertices in its interior.  This
%   operation has also been used by Bekos et al.~\cite{BKKS12}.  As
%   desired, we obtain an \SC{2}-layout on a grid of size $n^2\times n$.
% \end{pf}

\clearpage

\begin{figure}[h!]
  \textbf{\large C~~{An Example Run of Our Algorithm for \SC2-Layout}}
  \bigskip

% redo the example

\begin{minipage}[b]{.48\textwidth}
  \centering
  \includegraphics[scale=.63,page=1]{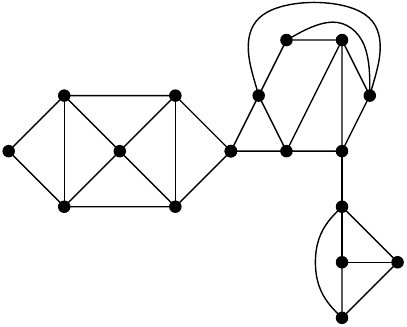}
  \caption*{(a) input graph}
\end{minipage}
\hfill
\begin{minipage}[b]{.48\textwidth}
  \centering
  \includegraphics[scale=.63,page=2]{pic/example_nonbicon_new}
  \caption*{(b) biconnected subgraphs}
\end{minipage}

\medskip

\begin{minipage}[b]{.48\textwidth}
  \centering
  \includegraphics[scale=.63,page=3]{pic/example_nonbicon_new}
  \caption*{(c) $st$-ordering}
\end{minipage}
\hfill
\begin{minipage}[b]{.48\textwidth}
  \centering
  \includegraphics[scale=.63,page=5]{pic/example_nonbicon_new}
  \caption*{(d) eliminating S-shapes}
\end{minipage}

\medskip

\begin{minipage}[b]{.48\textwidth}
  \centering
  \includegraphics[scale=.63,page=6]{pic/example_nonbicon_new}
  \caption*{(e) drawing degree-2 vertices with right angle}
\end{minipage}
\hfill
\begin{minipage}[b]{.48\textwidth}
  \centering
  \includegraphics[scale=.63,page=7]{pic/example_nonbicon_new}
  \caption*{(f) drawing subgraphs with \SC2}
\end{minipage}

\medskip

\begin{minipage}[b]{.48\textwidth}
  \centering
  \includegraphics[scale=.63,page=8]{pic/example_nonbicon_new}
  \caption*{(g) connecting second and fourth part by using the bridge (third part)}
\end{minipage}
\hfill
\begin{minipage}[b]{.48\textwidth}
  \centering
  \includegraphics[scale=.63,page=9]{pic/example_nonbicon_new}
  \caption*{(h) connecting first and second part}
\end{minipage}

\caption{An example-run of our Algorithm for \SC2-layout. The circle
  vertices of component~$i$ correspond to the cut vertex~$v_i'$.
  The square vertices correspond to cut vertices of other components.}
\label{fig:example}

\end{figure}

\paragraph{Acknowledgments.}

We thank Therese Biedl for pointers to Figure~\ref{fig:transport_map}
and the work of Liu et al.~\cite{lms-la2be-DAM98}.

\end{document}